\newtheorem{thm}{Theorem}[section]
\newtheorem{lem}[thm]{Lemma}
\newtheorem{defn}[thm]{Definition}
\newtheorem{prop}[thm]{Proposition}
\newtheorem{cor}[thm]{Corollary}
\theoremstyle{definition}
\title{Lights Out Puzzle in $p$ Colors: Evolution of Quiet Patterns}
\author{Wisdom Boinde, Igor Minevich, and Dipesh Poudel}
\date{\today}
\affil{Wentworth Institute of Technology}
\DeclareMathOperator{\Era}{Era}
\begin{document}

\maketitle

\begin{abstract}
    The Lights Out Puzzle represents a cellular automaton based on a grid of squares where clicking a square changes its state and the states of surrounding squares. A ``quiet pattern'' is a way to click such that in the end, no change is effected. We introduce a way to ``evolve'' quiet patterns in smaller grids into ones in $p$ times larger grids when the number of possible states of a square is a prime $p$. Using elliptic curves, we also find that an inverse ``de-evolution'' exists for most $p$. We also describe the only ways to click a grid of squares such that only 5 (the minimum) number of squares have a nonzero state.
\end{abstract}

\section{Introduction}
The Lights Out Puzzle began as an electronic game made in 1995 by Tiger Electronics, where a more-or-less random stored pattern of lights in a $5 \times 5$ grid would be turned on. Pressing any button would switch the state of the light on that button (between on and off), but also the lights horizontally and vertically adjacent to it, and the player needed to turn all the lights off. Since then, several variants of the puzzle have been explored, not only to sizes $M \times N$ \citep{Jaap} but to various graphs and numbers of colors (rather than just on and off) \citep{LOFiniteGraphs, Kreh}, where the rules dictate that each vertex has a ``state'' (i.e. color) in $\Z/k\Z$, where $k$ is the number of colors, and when a vertex is clicked, its own state and the state of the adjacent vertices are changed by $+1 \pmod k$. The reader can try out the puzzle and many such variants online at \url{https://tiny.one/lights}, \url{https://www.jaapsch.net/puzzles/lights.htm#java}, and many other places. Recently, we have also started exploring the generalization where the state of a vertex is represented by an element of a group $G$, and clicking a vertex with a specified element $g \in G$ multiplies the states of it and adjacent vertices by $g$ on the right \citep{FirstPaper, SecondPaper}. In these papers, we have shown many graphs, including grid graphs, cylinders, and tori, as well as any possible higher-dimensional generalization of these except for odd-dimensional cubes essentially reduce to working with an abelian group $G$, which, if finitely generated, can itself be broken down into studying the puzzle over $\Z$ and $\Z/k\Z$ for prime-power $k$.  

In essence, in all these settings, a ``click'' is a linear operation that adds 1's to the states of certain vertices. More precisely, we can collect the states of all the squares of the puzzle (or all the vertices if we want to be most general) in a \emph{state vector}. Then  a ``click'' is simply a linear operation that adds a vector of 1's and 0's to that state vector. If we put the vectors of 1's and 0's that represents clicking the $j$-th vertex in the $j$-th column of a matrix $A$, then the effect of a \emph{click vector} $\vec v$ (representing the vertices we want to click and how many times we want to click them) is simply to add $A\vec v$ to the current state vector. The reader will note that this matrix $A = A_{M, N}$, which we call the \emph{activation matrix} for the $M \times N$ puzzle, is just the adjacency matrix of the graph plus the identity matrix (since vertices affect themselves). In particular, the Lights Out question asks if we can find a vector $\vec v$ such that $A\vec v$ such that $\vec 1 + A\vec v = \vec 0$ (where $\vec 1$ and $\vec 0$ are the vectors of all 1's and all 0's, respectively).

This is obviously equivalent to solving the matrix equation $A\vec v = \vec 1$, which always has a solution when there are 2 colors \citep{Sutner1989}, but not always when there are more colors, e.g. there is no solution with $k = 3$ to the $2 \times 2$ puzzle. The number of solutions is the same as the number of vectors in the null space of $A$, i.e. the number of vectors $\vec v$ for which $A\vec v = \vec 0$. We call the latter vectors \emph{quiet patterns}, as Jaap did in \citep{Jaap}, because adding the clicks in a quiet pattern to any preset clicks does not change the resulting states of the board; we can always quietly sneak these in. 

\eg The matrix $A$ below is the activation matrix $A_{2, 3}$ for the $2 \times 3$ grid, if we order the grid by going through the first row (left to right) and then the second row.
\[A = \begin{pmatrix}
    1 & 1 & 0 & 1 & 0 & 0\\
    1 & 1 & 1 & 0 & 1 & 0\\
    0 & 1 & 1 & 0 & 0 & 1\\
    1 & 0 & 0 & 1 & 1 & 0\\
    0 & 1 & 0 & 1 & 1 & 1\\
    0 & 0 & 1 & 0 & 1 & 1\\
\end{pmatrix}, \qquad \vec q = \vecsix111010, \qquad \vec r = \vecsix110110, \qquad \vec s = \vecsix100001\]
Now the vector $\vec q$ above forms a quiet pattern (the leftmost in Figure \ref{fig:qp2x3}) since $A\vec q = \vec 0$; $\vec s$ forms a solution vector for the 2-color puzzle since $A\vec s = \vec 1 \pmod 2$; and $\vec r$ gives a solution vector to $A\vec v = \vec 1$ for any $k$-color puzzle. In Figure \ref{fig:qp2x3}, we have displayed the nontrivial quiet patterns for the 2-color $2 \times 3$ Lights Out Puzzle, not as $6 \times 1$ vectors but as $2 \times 3$ grids where the numbers denote how many times each square is clicked. 

% The reader will note that, of course, the null space is closed under addition and, together with the trivial quiet pattern $\vec 0$, the total number of quiet patterns (including $\vec 0$) is a power of $k$ when $k$ is prime, namely $k^{n(M, N)}$, where $n(M, N)$ is the nullity of the activation matrix $A = A_{M, N}$ corresponding to the $M \times N$ puzzle.

\begin{figure}[hbtp]
    \centering
\begin{tabular}{ccc}

\begin{tabular}{|c|c|c|}
\hline
    1 & 1 & 1 \\
\hline
    0 & 1 & 0\\
\hline
\end{tabular} & 

\begin{tabular}{|c|c|c|}
\hline
    0 & 1 & 0 \\
\hline
    1 & 1 & 1\\
\hline
\end{tabular} &

\begin{tabular}{|c|c|c|}
\hline
    1 & 0 & 1 \\
\hline
    1 & 0 & 1\\
\hline
\end{tabular} 
\end{tabular}
    \caption{The nontrivial quiet patterns for the 2-color $2 \times 3$ puzzle.}
    \label{fig:qp2x3}
\end{figure}

The 2-color quiet patterns are also called parity patterns by Knuth (see \citep{KnuthProblems} for two interesting related problems and solutions) because for any square in the grid, the sum of the number in it and the numbers in the squares horizontally or vertically adjacent is even. A parity pattern is called \emph{perfect} if there is no zero row or column in it, so in Figure \ref{fig:qp2x3}, the left and middle ones are perfect and the right one is not. Using the idea of the mikado diamond introduced in \citep{Mikado}, Knuth showed \citep{Knuth} that a quiet pattern for an $M \times N$ 2-color puzzle can be ``evolved'' into a $(2M+1) \times (2N+1)$ quiet pattern. The construction essentially replaces each square in the former pattern by a $3 \times 3$ square, whose centers are now exactly 1 space apart. A former 0 is replaced by an all-zero $3\times 3$ square, and a former 1 is replaced by a `+' pattern of 1's, and then any overlapping `+' patterns are superimposed and the numbers in them added where they intersect -- see Section \ref{sect:EvoDef} for more details. The quiet patterns from Figure \ref{fig:qp2x3} evolve to the ones in Figure \ref{fig:qp5x7}, and here we see just one demonstration of a general fact that \emph{perfect} parity patterns evolve into \emph{perfect} ones; imperfect ones certainly evolve into imperfect ones.

\begin{figure}[hbtp]
    \centering
    \begin{tabular}{ccc}

\begin{tabular}{|c|c|c|c|c|c|c|}
\hline
    0 & 1 & 0 & 1 & 0 & 1 & 0 \\
\hline
    1 & 1 & 0 & 1 & 0 & 1 & 1 \\
\hline
    0 & 1 & 0 & 0 & 0 & 1 & 0 \\
\hline
    0 & 0 & 1 & 1 & 1 & 0 & 0 \\
\hline
    0 & 0 & 0 & 1 & 0 & 0 & 0 \\
\hline
\end{tabular} & 

\begin{tabular}{|c|c|c|c|c|c|c|}
\hline
    0 & 0 & 0 & 1 & 0 & 0 & 0 \\    
\hline
    0 & 0 & 1 & 1 & 1 & 0 & 0 \\
\hline
    0 & 1 & 0 & 0 & 0 & 1 & 0 \\
\hline
    1 & 1 & 0 & 1 & 0 & 1 & 1 \\
\hline
    0 & 1 & 0 & 1 & 0 & 1 & 0 \\
\hline
\end{tabular} &

\begin{tabular}{|c|c|c|c|c|c|c|}
\hline
    0 & 1 & 0 & 0 & 0 & 1 & 0 \\    
\hline
    1 & 1 & 1 & 0 & 1 & 1 & 1 \\
\hline
    0 & 0 & 0 & 0 & 0 & 0 & 0 \\
\hline
    1 & 1 & 1 & 0 & 1 & 1 & 1 \\
\hline
    0 & 1 & 0 & 0 & 0 & 1 & 0 \\
\hline
\end{tabular}
\end{tabular}
    \caption{The 1-step evolutions of the quiet patterns in Figure \ref{fig:qp2x3}}
    \label{fig:qp5x7}
\end{figure}

The results can then be evolved further and further, producing intricate designs such as the mikado diamond in \citep{Mikado} and the pattern in Figure \ref{fig:Evo5of2x3}, which shows the 5th step of evolution of the leftmost quiet pattern in Figure \ref{fig:qp2x3}: a $95 \times 127$ perfect-parity pattern. The blue in the figure represents a 1, and the white represents a 0. 

\begin{figure}[hbtp]
    \centering
    \includegraphics[width=5in]{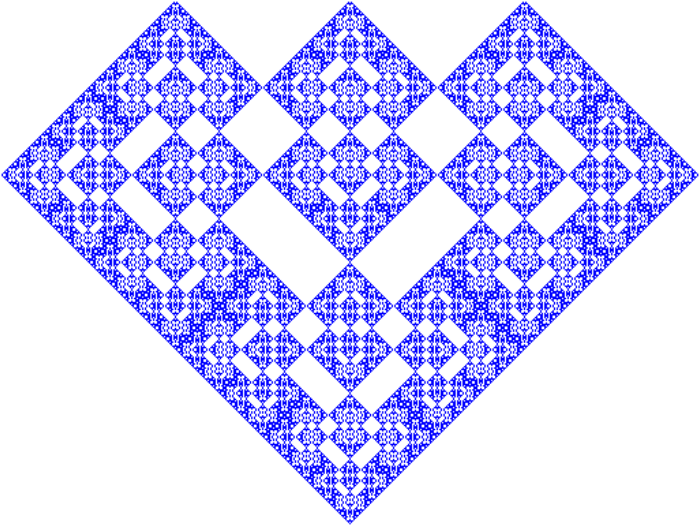}
    \caption{The pattern after 5 steps of evolution of the leftmost $2 \times 3$ quiet pattern in Figure \ref{fig:qp2x3}.}
    \label{fig:Evo5of2x3}
\end{figure}

In this paper, we generalize this construction (\cref{defn:EvoQ}) and show a way to evolve quiet patterns for any $M \times N$ puzzle with any prime number $p \ge 2$ of colors, in such a way that a ``perfect'' quiet pattern (with no zero columns or rows) evolves to another ``perfect'' quiet pattern unless $p = 2$ and there are two identical adjacent columns or rows (\cref{prop:perfect}). We show that a pattern of clicks $Q$ is a quiet pattern if and only if its evolution is a quiet pattern in \cref{thm:Q<=>Evo(Q)}. We then show that, just as in \citep{Mikado}, there is a map $\Era_p$ that acts as almost a left-inverse to the evolution map $\Evo_p$, such that
\[\Era_p \circ \Evo_p = c_p\]
where $c_p$ is the number in the center of the primitive pattern $\Evo_p(1)$ and the map $c_p$ here means ``multiplication by $c_p \pmod p$''. In particular, there is a natural left inverse for the evolution construction unless $c_p = 0$. As it turns out, there are actually infinitely many primes $p$ such that $c_p$ \textit{is} 0, but they are few and far between; namely, their density is 0 (see \cref{thm:c_p_ell_curve}). This is due to the fact that these numbers are actually the Hasse invariant of an elliptic curve $E$ which has a fractional $j$-invariant. We also show that the points on $E$ over $\F_p$ give us a group of specific quiet patterns that extend infinitely far. There is a natural action of $D_8$ on the points of $E$ using its group law, and we show in \cref{sect:EllCurves} how this action is actually reflected concretely in the quiet patterns that the points on $E$ represent.

Finally, we discuss a couple of interesting properties of the evolution patterns in \cref{sect:further}. We show in \cref{prop:Ratigan} that a quiet pattern $\pmod 2$ in an infinite grid bounded above, with a single click in the top row, is made up of Tetris ``T'' shapes and dominoes. Perhaps most significantly, in \cref{thm:5Lights}, we generalize a theorem in \citep{Mikado} that says if we click a finite number of times in a grid of squares extending infinitely far in all directions, then we will always have at least 5 lights on, and the only way to leave exactly 5 lights on is to click one of the mikado patterns. The generalization says the same is true if we replace the idea of a light being ``on'' with the state of a square simply being nonzero, as long as the number of colors is prime.

\section{Evolution of Quiet Patterns}
\label{sect:EvoDef}
In this note, we will introduce a way to ``evolve'' (or expand) a quiet pattern into a much larger quiet pattern, the expansion ratio depending on our characteristic $k$. This beautiful construction is a generalization of the mikado diamond \citep{Mikado} (see also  \citep{Knuth} Ex. \#193), and we feel it is an important part of the theory of the $k$-color Lights Out Puzzles. %For one, it will allow us to conclude that for prime $k$, $o(k(M+1)-1, k)$ is divisible by $k \cdot o(M, k)$. 
Unfortunately, the useful properties of this construction do not hold in general when $k$ is composite, and here we will restrict our attention to just prime $k$. The motivation for the name ``evolution'' is Proposition \ref{prop:infProcDown} below (and the images that follow). To experiment with evolutions, the reader is invited to visit \url{https://igor-minevich.github.io/lights-out/evolution.html}.

%Define quiet patterns and state their importance, define quiet processions, and encourage the reader to try making quiet patterns and quiet processions at our website, as well as the constructions presented further in this paper.

We first define the evolution of a single square and then that of an array of numbers, the goal being that a pattern of clicks evolves into a much larger pattern, and if the original pattern was a quiet pattern, its evolution will also be one. We will be using the following key Laurent polynomial throughout, which comes from an important idea in \citep{AlgApproach}: 
\begin{equation}
    \label{eqn:g}
    g(x, y) = x + y + 1 + x^{-1} + y^{-1}.
\end{equation}
%********Expand on this?*******
We slightly generalize the insight given in \citep{AlgApproach} for $\F_2$ to any number of colors $k$: if we allow a Laurent polynomial $f(x, y) \in \Z/k\Z[x, y, x^{-1}, y^{-1}]$ to represent the number of clicks in each square of a Lights Out Puzzle grid by letting the coefficient of $x^iy^j$ be the number of clicks in the square with coordinates $(i, j)$, then the result of those clicks is the pattern corresponding to the polynomials $f(x, y) \cdot g(x, y)$ (with some considerations of the boundary of the grid). %EXPAND ON THIS OR MAKE IT IN AN INFINITE GRID
So instead of using linear algebra and the matrix $A$ to find the result of a vector of clicks, we can simply multiply two polynomials. This brings ring theory into the picture and will allow us to arrive at some results below very easily.

\begin{defn}
\label{defn:Evo}
For a prime number $p \ge 2$, the {\bf primitive evolution pattern} $\Evo_p(1) = \Evo(1)$ is the square $(2p-1) \times (2p-1)$ pattern of clicks $\pmod p$ defined as follows. Label the coordinates of the squares with $(0, 0)$ in the very middle. The number in square $(i, j)$ is the coefficient of $x^iy^j$ in the expansion of $g(x, y)^{p-1} \pmod p$.

The primitive evolution pattern $\Evo_p(k_0)$ for a number $k_0 \in \Z/p\Z$ is defined as $k_0 \cdot \Evo_p(1)$, i.e. by the expansion of $k_0 \cdot g^{p-1} \pmod p$.
\end{defn}

\eg \label{eg:Evo_2(1)} For $p = 2$, as in \citep{Mikado}, $\Evo_2(1)$ forms a sort of ``plus'' shape, and the evolution $\Evo_2(0)$ is the zero array:
\[\Evo_2(1) = \begin{array}{ccc}
0&1&0\\
1&1&1\\
0&1&0
\end{array}, \qquad \Evo_2(0) = \begin{array}{ccc}
0&0&0\\
0&0&0\\
0&0&0
\end{array}\]

\eg \label{eg:Evo_3} For $p = 3$, the primitive evolution patterns of the two nonzero elements in $\Z/3\Z$ are:
\[\Evo_3(1) = \begin{array}{ccccc}
0 & 0 & 1 & 0 & 0 \\ 
0 & 2 & 2 & 2 & 0 \\ 
1 & 2 & 2 & 2 & 1 \\ 
0 & 2 & 2 & 2 & 0 \\ 
0 & 0 & 1 & 0 & 0
\end{array}, \qquad \Evo_3(2) = \begin{array}{ccccc}
0 & 0 & 2 & 0 & 0 \\ 
0 & 1 & 1 & 1 & 0 \\ 
2 & 1 & 1 & 1 & 2 \\ 
0 & 1 & 1 & 1 & 0 \\ 
0 & 0 & 2 & 0 & 0
\end{array}\]

\eg For $p = 5
$, we have:
\[\Evo_5(1) = \begin{array}{ccccccccc}
0 & 0 & 0 & 0 & 1 & 0 & 0 & 0 & 0 \\ 
0 & 0 & 0 & 4 & 4 & 4 & 0 & 0 & 0 \\ 
0 & 0 & 1 & 2 & 2 & 2 & 1 & 0 & 0 \\ 
0 & 4 & 2 & 1 & 0 & 1 & 2 & 4 & 0 \\ 
1 & 4 & 2 & 0 & 1 & 0 & 2 & 4 & 1 \\ 
0 & 4 & 2 & 1 & 0 & 1 & 2 & 4 & 0 \\ 
0 & 0 & 1 & 2 & 2 & 2 & 1 & 0 & 0 \\ 
0 & 0 & 0 & 4 & 4 & 4 & 0 & 0 & 0 \\ 
0 & 0 & 0 & 0 & 1 & 0 & 0 & 0 & 0
\end{array} 
, \qquad \dots,  \qquad \Evo_5(4) = \begin{array}{ccccccccc}
0 & 0 & 0 & 0 & 4 & 0 & 0 & 0 & 0 \\ 
0 & 0 & 0 & 1 & 1 & 1 & 0 & 0 & 0 \\ 
0 & 0 & 4 & 3 & 3 & 3 & 4 & 0 & 0 \\ 
0 & 1 & 3 & 4 & 0 & 4 & 3 & 1 & 0 \\ 
4 & 1 & 3 & 0 & 4 & 0 & 3 & 1 & 4 \\ 
0 & 1 & 3 & 4 & 0 & 4 & 3 & 1 & 0 \\ 
0 & 0 & 4 & 3 & 3 & 3 & 4 & 0 & 0 \\ 
0 & 0 & 0 & 1 & 1 & 1 & 0 & 0 & 0 \\ 
0 & 0 & 0 & 0 & 4 & 0 & 0 & 0 & 0
\end{array} 
 \]

\rmk The reader will note the full $D_{2 \cdot 4}$ symmetry (i.e. $90^\circ$ rotational and mirror symmetries) of the evolutions above. This is due to the fact that $g(x, y)$ is symmetric with respect to the transformations $x \lr x^{-1}, y \lr y^{-1}$, and $x \lr y$.

We generalize a fundamental property of the mikado diamond \citep{Mikado} as follows.

\begin{prop}
\label{prop:GenMikado}
Let $p$ be prime. Then the result of clicking a single $\Evo_p(k_0)$ pattern in a grid that is infinite on all sides is that all squares have a state of 0 except in five places: in the middle of the pattern and at the outermost vertices of the diamond of nonzero entries in $\Evo_p(k_0)$ (just outside of the $\Evo_p(k_0)$ pattern), and in each of those five places we get a state of precisely $k_0$. 
\end{prop}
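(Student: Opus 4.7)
The plan is to translate the statement into the polynomial language introduced just before the proposition, where a click pattern is encoded as a Laurent polynomial in $\Z/p\Z[x,y,x^{-1},y^{-1}]$ and the resulting state pattern is obtained by multiplication by $g(x,y)$. In this language, the click pattern $\Evo_p(k_0)$ corresponds by \cref{defn:Evo} to the polynomial $k_0\, g(x,y)^{p-1}$, so the resulting state pattern corresponds to
\[
    k_0\, g(x,y)^{p-1} \cdot g(x,y) \;=\; k_0\, g(x,y)^p.
\]
Since we are working modulo the prime $p$ on an infinite grid (so there is no boundary interaction to worry about), the Frobenius / freshman's dream identity applies termwise to $g$:
\[
    g(x,y)^p \;=\; \left(x + y + 1 + x^{-1} + y^{-1}\right)^p \;\equiv\; x^p + y^p + 1 + x^{-p} + y^{-p} \pmod{p}.
\]
Thus the result of the single click of $\Evo_p(k_0)$ is the Laurent polynomial
\[
    k_0\bigl(x^p + y^p + 1 + x^{-p} + y^{-p}\bigr),
\]
which has exactly five nonzero coefficients, each equal to $k_0$, supported at the monomials $x^{\pm p}$, $y^{\pm p}$, and $1$.

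Translating back to grid coordinates, these five monomials correspond to the squares $(\pm p, 0)$, $(0, \pm p)$, and $(0,0)$. The central square $(0,0)$ is the middle of the $\Evo_p(k_0)$ pattern. The nonzero entries of $\Evo_p(k_0) = k_0\, g^{p-1}$ occupy (at most) the square $(2p-1)\times(2p-1)$ region $\{(i,j) : |i|,|j|\le p-1\}$, with the outermost vertices of the diamond of nonzero entries at $(\pm (p-1), 0)$ and $(0, \pm(p-1))$ (as seen in the examples for $p=2,3,5$ preceding the proposition). Therefore the four noncentral support points $(\pm p,0)$ and $(0,\pm p)$ lie exactly one step outside the $\Evo_p(k_0)$ pattern, adjacent to those outermost vertices, as claimed. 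This gives all five nonzero state values of $k_0$ at precisely the asserted locations, and $0$ everywhere else.

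The main thing that makes this go through is the Frobenius step, which is exactly where primality of $p$ is essential; once it is invoked the rest is a matter of reading off coefficients. The only other thing requiring care is the convention for which direction in the grid corresponds to positive/negative $x$ and $y$, but since the statement is symmetric in the four outer points, any consistent choice works.
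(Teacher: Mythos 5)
Your proof is correct and is essentially the paper's own argument: the paper's proof is the single line $k_0\,g(x,y)^p \equiv k_0(x^p + y^p + 1 + x^{-p} + y^{-p}) \pmod p$, i.e.\ exactly your Frobenius step applied to $k_0\,g^{p-1}\cdot g$. Your additional bookkeeping locating the five monomials relative to the diamond of $\Evo_p(k_0)$ just makes explicit what the paper leaves to the reader.
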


\begin{proof}
    If $p$ is prime, then $k_0(x + y + 1 + x^{-1} + y^{-1})^p \equiv k_0(x^p + y^p + 1 + x^{-p} + y^{-p}) \pmod p$.
\end{proof}

In \cref{thm:5Lights} we will prove a sort of converse to \cref{prop:GenMikado}.

\eg As an example of Proposition \ref{prop:GenMikado}, the clicks and result of the clicks in $\Evo_3(2)$ are as follows:
\begin{center}
\begin{tabular}{ccc}
\underline{Clicks} & & \underline{Result}\\
$\begin{array}{ccccccccc}
\ddots & \vdots & \vdots & \vdots & \vdots & \vdots & \vdots & \vdots & \iddots\\ 
\cdots & 0 & 0 & 0 & 0 & 0 & 0 & 0 & \cdots \\
\cdots & 0 & 0 & 0 & 2 & 0 & 0 & 0 & \cdots \\
\cdots & 0 & 0 & 1 & 1 & 1 & 0 & 0 & \cdots \\
\cdots & 0 & 2 & 1 & 1 & 1 & 2 & 0 & \cdots \\
\cdots & 0 & 0 & 1 & 1 & 1 & 0 & 0 & \cdots \\
\cdots & 0 & 0 & 0 & 2 & 0 & 0 & 0 & \cdots \\
\cdots & 0 & 0 & 0 & 0 & 0 & 0 & 0 & \cdots \\
\iddots & \vdots & \vdots & \vdots & \vdots & \vdots & \vdots & \vdots & \ddots
\end{array}$
& $\qquad \Ra \qquad$ & $\begin{array}{ccccccccc}
\ddots & \vdots & \vdots & \vdots & \vdots & \vdots & \vdots & \vdots & \iddots\\ 
\cdots & 0 & 0 & 0 & \color{blue}{2} & 0 & 0 & 0 & \cdots \\
\cdots & 0 & 0 & 0 & 0 & 0 & 0 & 0 & \cdots \\
\cdots & 0 & 0 & 0 & 0 & 0 & 0 & 0 & \cdots \\
\cdots & \color{blue}{2} & 0 & 0 & \color{blue}{2} & 0 & 0 & \color{blue}{2} & \cdots \\
\cdots & 0 & 0 & 0 & 0 & 0 & 0 & 0 & \cdots \\
\cdots & 0 & 0 & 0 & 0 & 0 & 0 & 0 & \cdots \\
\cdots & 0 & 0 & 0 & \color{blue}{2} & 0 & 0 & 0 & \cdots \\
\iddots & \vdots & \vdots & \vdots & \vdots & \vdots & \vdots & \vdots & \ddots
\end{array}$
\end{tabular}
\end{center}

We invite the reader to use \url{https://tiny.one/lights} and, starting from the default state, click some of the patterns $\Evo_p(k_0)$ for small $p$ shown in the above examples. One can use a grid the same size as the pattern or leave an extra row and column on each side of the pattern to see that exactly 5 squares end up having the nonzero state of $k_0$. To quickly generate these patterns and see them in action, go to \url{https://igor-minevich.github.io/lights-out/evolution.html}.

Here is a curious fact arising from the construction using $g^{p-1}$ (in reality, this fact is what inspired us to use $g^{p-1}$ in the first place). If we click the middle square in a $(2p-1) \times (2p-1)$ grid $k_0$ times, record the resulting states, then start over in an empty grid and click the squares that had nonzero states the same number of times as the state itself, and repeat this process $p-2$ more times, we'll end up with precisely the states matching the primitive evolution pattern $\Evo_p(k_0)$, and if we do it again, we'll just end up with only the middle square lit, in a state of $k_0$. Formally:

\begin{thm}
\label{thm:A^kG}
Let $A$ be the activation matrix for the standard Lights Out Puzzle on the $(2p-1)\times(2p-1)$ grid. Let $\vec v$ be the $(2p-1)^2$-dimensional vector corresponding to a $(2p-1) \times (2p-1)$ grid consisting of 0's everywhere except the middle square, which has $k_0$ in it. Then $A^{p-1}\vec v$ generates the $\Evo_p(k_0)$ pattern, and $A^p\vec v = \vec v$. 
\end{thm}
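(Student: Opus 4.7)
The plan is to leverage the polynomial representation of clicks introduced just before the theorem: a click pattern encoded by the Laurent polynomial $f(x,y)$ produces state changes encoded by $f(x,y)\cdot g(x,y)$, where $g(x,y) = x + y + 1 + x^{-1} + y^{-1}$. The vector $\vec v$ corresponds to the constant polynomial $k_0$, so each application of $A$ should correspond to multiplication by $g$ --- \emph{provided} all resulting terms stay within the grid, since effects that would fall outside the $(2p-1) \times (2p-1)$ frame are simply lost in the finite-grid model.

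The first step is to verify that no boundary truncation occurs during the first $p-1$ applications of $A$. I would argue this by an elementary support bound: the nonzero coefficients of $g^n$ live in the $\ell^1$-diamond $|i| + |j| \le n$, since each multiplication by $g$ enlarges the support by at most one step in the $\ell^1$-norm. For $n \le p-1$, this diamond fits inside $\{(i,j) : -(p-1) \le i, j \le p-1\}$, so no coefficient falls off the edge at any intermediate step. Hence $A^{p-1} \vec v$ coincides on the nose with $k_0 \cdot g^{p-1}$, which is precisely $\Evo_p(k_0)$ by \cref{defn:Evo}.

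For the second claim, applying $A$ once more yields a pattern that, on an infinite grid, would correspond to $k_0 g^p$. By the freshman's dream modulo $p$ --- the same identity already used in the proof of \cref{prop:GenMikado} --- we have $g^p \equiv x^p + y^p + 1 + x^{-p} + y^{-p} \pmod p$. The four nonconstant terms sit at positions $(\pm p, 0)$ and $(0, \pm p)$, all of which lie strictly outside the $(2p-1) \times (2p-1)$ grid. These terms get truncated away, leaving only $k_0$, which is exactly $\vec v$.

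The main (and essentially only) obstacle is the bookkeeping around boundary truncation: the action of $A$ on the finite grid is \emph{not} literally multiplication by $g$ in general, because effects that spill off the edge vanish and cannot be recovered by subsequent clicks. The support bound $|i| + |j| \le n$ is exactly what guarantees the two operations agree for all steps $n \le p-1$, and the freshman's dream then guarantees that, in the very step where the support first reaches outside the grid, it does so at exactly the four points needed to leave the original $\vec v$ behind.
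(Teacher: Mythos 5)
Your proposal is correct and follows the same route as the paper: interpret the action of $A$ as multiplication by $g(x,y)$ on the click polynomial, so that $A^{p-1}\vec v$ corresponds to $k_0\,g^{p-1}=\Evo_p(k_0)$ and the $p$-th application corresponds to $k_0\,g^{p}\equiv k_0(x^{p}+y^{p}+1+x^{-p}+y^{-p})\pmod p$. Your write-up is in fact more careful than the paper's one-line proof, since you explicitly verify via the support bound $|i|+|j|\le n$ that no boundary truncation occurs during the first $p-1$ steps, and that at step $p$ the only spillover is at the four points $(\pm p,0)$, $(0,\pm p)$, which lie outside the $(2p-1)\times(2p-1)$ grid and whose loss leaves exactly $\vec v$.
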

\begin{proof}
Applying $A$ to a vector of clicks to get the vector of states is equivalent to multiplying $g(x, y)$ by the polynomial representing those clicks to get the ``state'' polynomial.
\end{proof}

We will use the number of clicks in the center of the primitive $\Evo_p(1)$ construction through much of the rest of the paper, and we will find several equivalent ways to arrive at these numbers. For one, this is the Hasse invariant of an elliptic curve. Thus, we make the following definition.
\begin{defn}
\label{defn:c_p}
    $c_p$ is the number of clicks in the center of the primitive evolution pattern $\Evo_p(1)$, i.e. $c_p$ is the constant coefficient of $(x + y + 1 + x^{-1} + y^{-1})^{p-1} \pmod p$.
\end{defn}

\begin{cor}
The number of clicks $c_p$ in the center of $\Evo_p(1)$ is given by taking the very middle entry (middle entry in the middle column) in the $(2p-1)^2 \times (2p-1)^2$ matrix $A^{p-1}$, where $A$ is the activation matrix for the $(2p-1) \times (2p-1)$ Lights Out Puzzle.
\end{cor}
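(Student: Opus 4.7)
The plan is to deduce this as an essentially immediate consequence of \cref{thm:A^kG} together with \cref{defn:c_p}. The key observation is that the vector $\vec v$ in \cref{thm:A^kG}, taken with $k_0 = 1$, is precisely the standard basis vector $\vec e_m$, where $m$ denotes the index corresponding to the middle square of the $(2p-1) \times (2p-1)$ grid.

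The first step is to recall that for any matrix $M$, the product $M \vec e_m$ equals the $m$-th column of $M$. Applying this with $M = A^{p-1}$, the vector $A^{p-1} \vec e_m$ is exactly the middle column of the matrix $A^{p-1}$. The second step is to invoke \cref{thm:A^kG} (with $k_0 = 1$), which tells us that this same vector encodes the pattern $\Evo_p(1)$ when reshaped into a $(2p-1) \times (2p-1)$ array. Consequently, the entry of $A^{p-1} \vec e_m$ in the coordinate labeled by the middle square $m$ records the coefficient of $x^0 y^0$ in $g(x, y)^{p-1} \pmod p$, which is the number placed in the center of $\Evo_p(1)$.

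By \cref{defn:c_p}, this central number is precisely $c_p$. On the other hand, the entry in coordinate $m$ of the $m$-th column of $A^{p-1}$ is the diagonal entry $(A^{p-1})_{m,m}$, which, because the grid is square and $m$ is the central index, is the middle entry of the middle column of the matrix. Equating the two yields $c_p = (A^{p-1})_{m,m}$, as claimed.

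There is no real obstacle here; the statement is purely a dictionary translation between the polynomial description of $\Evo_p(1)$ built into \cref{defn:c_p} and the matrix action underlying \cref{thm:A^kG}. The only minor care point is to verify that ``middle entry in the middle column'' of the matrix indeed corresponds to the central square in both the domain and the codomain of $A^{p-1}$, but this is immediate from the fact that the grid is square of odd side length $2p-1$ and both the rows and columns of $A$ are indexed by the same set of squares in the same order.
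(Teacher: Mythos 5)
Your argument is correct and is essentially identical to the paper's own proof: both identify $\vec v$ with $\vec e_m$ for the central index $m$, use \cref{thm:A^kG} to see that $A^{p-1}\vec e_m$ (the middle column of $A^{p-1}$) encodes $\Evo_p(1)$, and read off $c_p$ as its $m$-th entry via \cref{defn:c_p}. Nothing further is needed.
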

\begin{proof}
This follows from basic linear algebra. Each column of the matrix $A$ represents the result of a click in a square, which adds 1 to the neighboring squares and itself. Now letting $m = 2p^2 - 2p + 1$, the vector $\vec e_m$ is the vector with a single 1 in the middle entry of the $(2p-1)^2$-length vector and 0's everywhere else. Then $A^{p-1}\vec e_m$ is precisely the primitive evolution pattern $\Evo_p(1)$; this corresponds to the middle column of $A^{p-1}$, and $c_p$ is the middle ($m$-th) entry in that column.
\end{proof}

\begin{defn}
\label{defn:EvoQ}
The {\bf evolution} $\Evo_p(Q) = \Evo(Q)$ {\bf of an $M \times N$ array $Q$} of clicks $\pmod p$ is defined as follows. Write $Q$ as the polynomial $f(x, y) = \sum_{i=0}^{N-1} \sum_{j=0}^{M-1} c_{ij}x^iy^j$. Then $\Evo_p(Q)$ is the $(p(M+1)-1) \times (p(N+1)-1)$ array of clicks corresponding to the polynomial $f(x^p, y^p)g(x, y)^{p-1} \pmod p$. 
%Could multiply by x^{p-1}y^{p-1} to make all coordinates nonzero, but then that contradicts the definition of Evo(1).

We will denote by $\Evo_p^{(n)}(Q) = \Evo^{(n)}(Q)$ the $n$-fold evolution of the quiet pattern, i.e. the $n$-th iteration of the $\Evo_p$ function.
\end{defn}

\rmk \label{rmk:visualizingEvo} A more visual (and perhaps intuitive) way to look at the evolution of an $M \times N$ array $Q$ is the following: $\Evo_p(Q)$ is the pattern defined by replacing each cell with coordinates $(i, j)$ ($1 \le i \le M, 1 \le j \le N$) and the number $k_0$ by the $\Evo_p(k_0)$ pattern centered at $(pi, pj)$ and adding the cells from the various evolutions $\pmod p$ where they overlap; see Figure \ref{fig:evo_3-diamonds}. This is because for a pattern given by $f(x, y)$, for each $i = 0, \dots, N-1$ and $j = 0, \dots, M-1$ the evolution construction takes the primitive evolution pattern given by $g(x, y)^{p-1}$ and places it centered at $(pi, pj)$ and multiplied by the coefficient of $x^iy^j$ in $f(x, y)$ (this comes from $f(x^p, y^p)$). Note that the squares which have nonzero clicks always form a diamond, only two diamonds ever overlap, and there are holes in between the outermost diamonds, where we are left with 0's. 

Figure \ref{fig:Evo_3^6(1)} continues to show further evolutions of the patterns in Figure \ref{fig:evo_3-diamonds}, replacing the light blue color in Figure \ref{fig:evo_3-diamonds} with a darker blue to show better contrast between the colors. Figure \ref{fig:evo_5^4(1)} shows evolutions of the same $1 \times 1$ array (1) with $k = 5$ colors. The fractal nature of the patterns is evident from these pictures.

\begin{figure}[hbtp]
    \centering
    \includegraphics[height=3in]{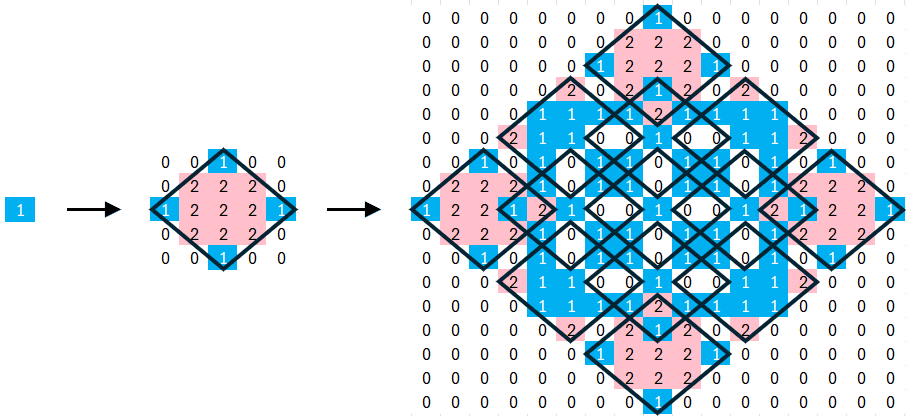}
    \caption{Two steps of the evolution of the $1 \times 1$ array (1) for $k = 3$, showing how the diamonds coming from $\Evo_p(1)$ are superimposed in $\Evo_p^{(2)}(1)$.}
    \label{fig:evo_3-diamonds}
\end{figure}

%  \quad \to 
% \begin{array}{ccccc}
% 0 & 0 & 1 & 0 & 0 \\
% 0 & 2 & 2 & 2 & 0 \\
% 1 & 2 & 2 & 2 & 1 \\
% 0 & 2 & 2 & 2 & 0 \\
% 0 & 0 & 1 & 0 & 0 \\
% \end{array} \quad \to 
% \begin{array}{ccccccccccccccccc}
% 0 & 0 & 0 & 0 & 0 & 0 & 0 & 0 & 1 & 0 & 0 & 0 & 0 & 0 & 0 & 0 & 0 \\
% 0 & 0 & 0 & 0 & 0 & 0 & 0 & 2 & 2 & 2 & 0 & 0 & 0 & 0 & 0 & 0 & 0 \\
% 0 & 0 & 0 & 0 & 0 & 0 & 1 & 2 & 2 & 2 & 1 & 0 & 0 & 0 & 0 & 0 & 0 \\
% 0 & 0 & 0 & 0 & 0 & 2 & 0 & 2 & 1 & 2 & 0 & 2 & 0 & 0 & 0 & 0 & 0 \\
% 0 & 0 & 0 & 0 & 1 & 1 & 1 & 1 & 2 & 1 & 1 & 1 & 1 & 0 & 0 & 0 & 0 \\
% 0 & 0 & 0 & 2 & 1 & 1 & 0 & 0 & 1 & 0 & 0 & 1 & 1 & 2 & 0 & 0 & 0 \\
% 0 & 0 & 1 & 0 & 1 & 0 & 1 & 1 & 0 & 1 & 1 & 0 & 1 & 0 & 1 & 0 & 0 \\
% 0 & 2 & 2 & 2 & 1 & 0 & 1 & 1 & 0 & 1 & 1 & 0 & 1 & 2 & 2 & 2 & 0 \\
% 1 & 2 & 2 & 1 & 2 & 1 & 0 & 0 & 1 & 0 & 0 & 1 & 2 & 1 & 2 & 2 & 1 \\
% 0 & 2 & 2 & 2 & 1 & 0 & 1 & 1 & 0 & 1 & 1 & 0 & 1 & 2 & 2 & 2 & 0 \\
% 0 & 0 & 1 & 0 & 1 & 0 & 1 & 1 & 0 & 1 & 1 & 0 & 1 & 0 & 1 & 0 & 0 \\
% 0 & 0 & 0 & 2 & 1 & 1 & 0 & 0 & 1 & 0 & 0 & 1 & 1 & 2 & 0 & 0 & 0 \\
% 0 & 0 & 0 & 0 & 1 & 1 & 1 & 1 & 2 & 1 & 1 & 1 & 1 & 0 & 0 & 0 & 0 \\
% 0 & 0 & 0 & 0 & 0 & 2 & 0 & 2 & 1 & 2 & 0 & 2 & 0 & 0 & 0 & 0 & 0 \\
% 0 & 0 & 0 & 0 & 0 & 0 & 1 & 2 & 2 & 2 & 1 & 0 & 0 & 0 & 0 & 0 & 0 \\
% 0 & 0 & 0 & 0 & 0 & 0 & 0 & 2 & 2 & 2 & 0 & 0 & 0 & 0 & 0 & 0 & 0 \\
% 0 & 0 & 0 & 0 & 0 & 0 & 0 & 0 & 1 & 0 & 0 & 0 & 0 & 0 & 0 & 0 & 0 \\
% \end{array}\]

\newcommand{\hsone}{\hspace{0.1in}}
\begin{figure}[hbtp]
    \centering
    \includegraphics[width=0.02\linewidth]{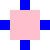} \hsone
    \includegraphics[width=0.05\linewidth]{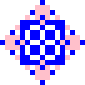} \hsone
    \includegraphics[width=0.1\linewidth]{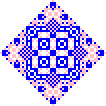} \hsone
    \includegraphics[width=0.15\linewidth]{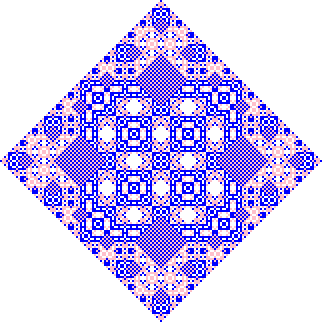} \hsone
    \includegraphics[width=0.2\linewidth]{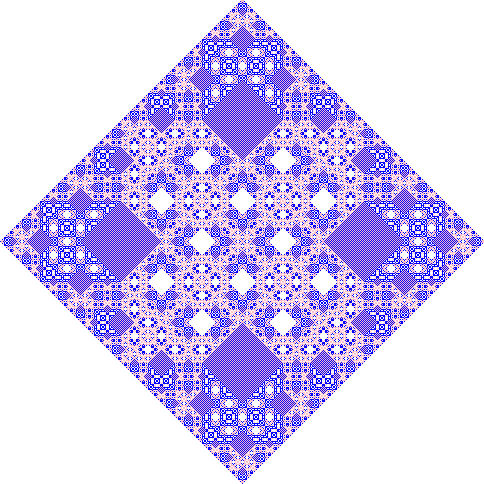} \hsone
    \includegraphics[width=0.25\linewidth]{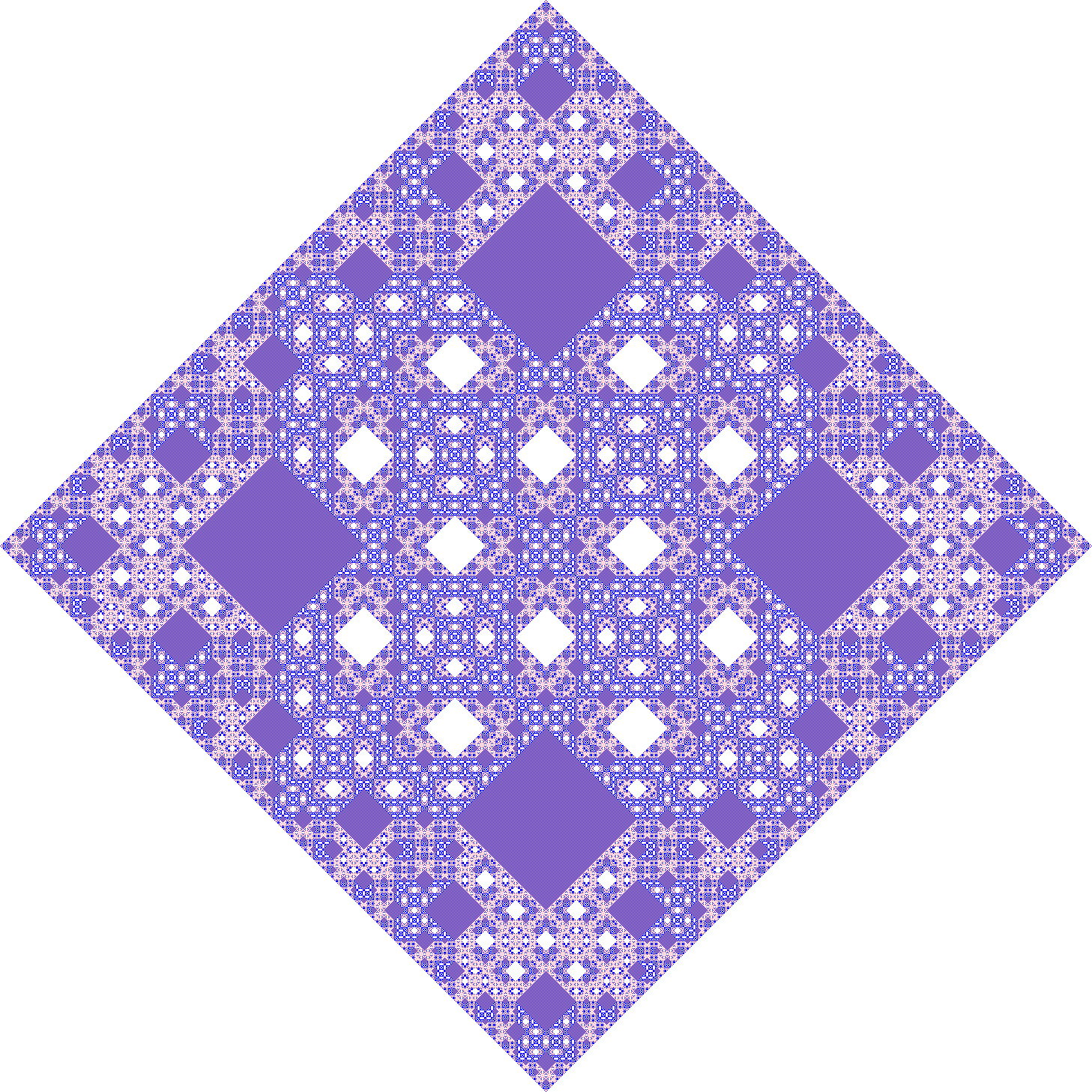}
    \includegraphics[width=0.05\linewidth]{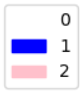}
    \caption{The six 3-color evolutions $\Evo_3(1), \Evo_3^{(2)}(1), \dots, \Evo_3^{(6)}(1)$ (left to right). Note: images are not to scale.}
    \label{fig:Evo_3^6(1)}
\end{figure}

\begin{figure}[hbtp]
    \centering
    \includegraphics[width=0.05\linewidth]{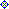} \hsone
    \includegraphics[width=0.1\linewidth]{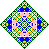} \hsone
    \includegraphics[width=0.2\linewidth]{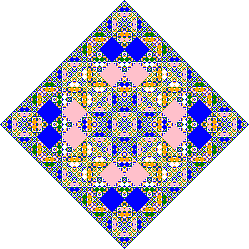} \hsone
    \includegraphics[width=0.4\linewidth]{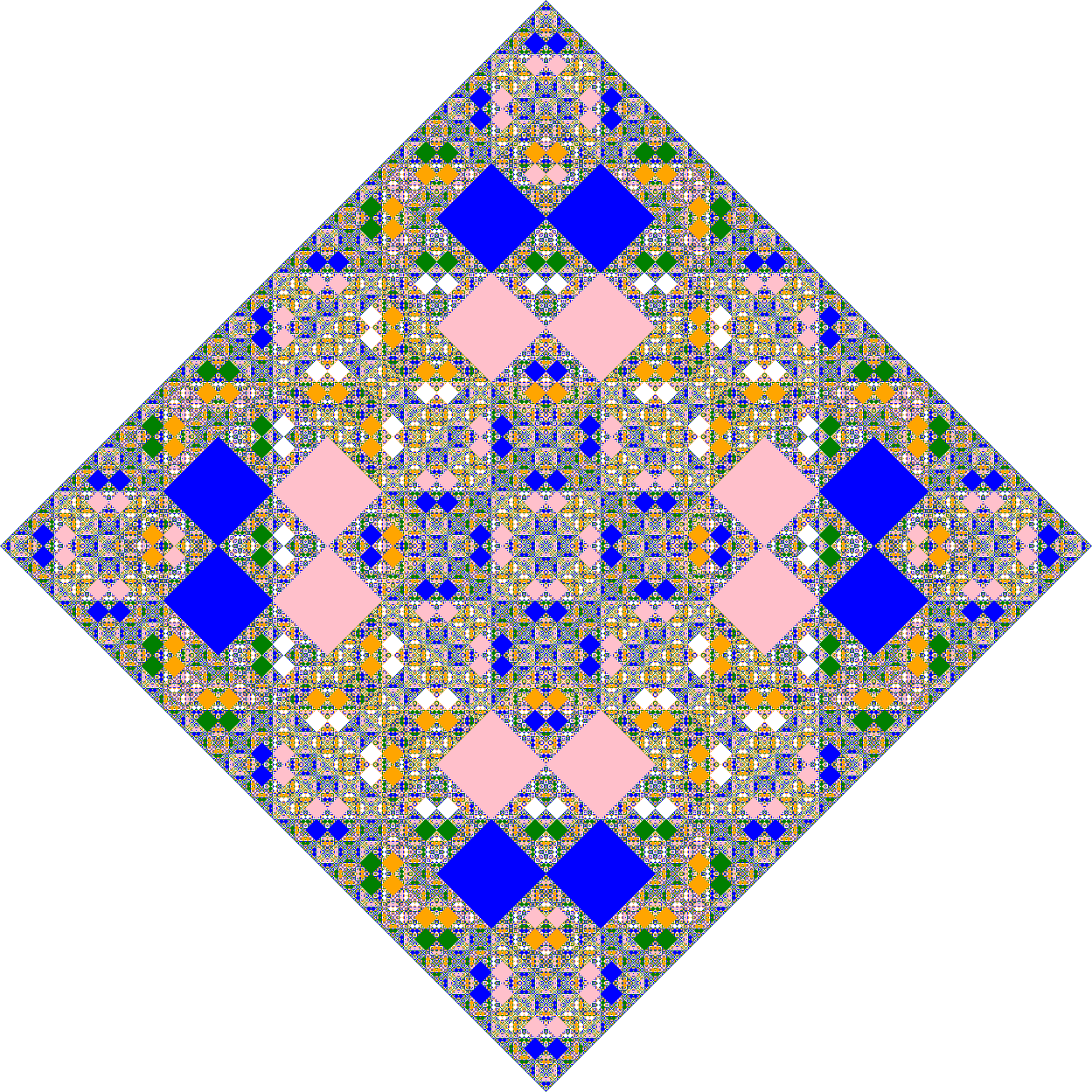}
    \includegraphics[width=0.05\linewidth]{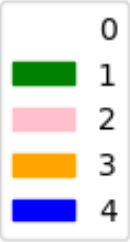}
    \caption{The four 5-color evolutions $\Evo_5(1), \Evo_5^{(2)}(1), \Evo_5^{(3)}(1), \Evo_5^{(4)}(1)$(left to right). See legend for description of colors. Note: images are not to scale.}
    \label{fig:evo_5^4(1)}
\end{figure}

%\FloatBarrier
\rmk Note that since $p$ is prime, the pattern $\Evo_p^{(n)}(1)$ is simply given by the polynomial $g(x, y)^{p^n -1}$: $\Evo_p^{(2)}(1)$ is given by $g(x^p, y^p)^{p-1}g(x, y)^{p-1} = g(x, y)^{p(p-1)}g(x, y)^{p-1} = g(x, y)^{p^2-1}$, and an easy induction shows this holds in general.
%THIS DOES NOT HOLD FOR COMPOSITE k!!

\eg \begin{samepage}
    Let $p = 3$ and $Q$ be the $3 \times 4$ quiet pattern below. Its evolution is an $11 \times 14$ quiet pattern: 
\[Q = \begin{array}{cccc}
1 & 2 & 2 & 1\\
0 & 1 & 1 & 0\\
1 & 2 & 2 & 1
\end{array} \quad \Ra \quad \Evo_3(Q) = \begin{array}{cccccccccccccc}
0 & 0 & 1 & 0 & 0 & 2 & 0 & 0 & 2 & 0 & 0 & 1 & 0 & 0\\
0 & 2 & 2 & 2 & 1 & 1 & 1 & 1 & 1 & 1 & 2 & 2 & 2 & 0\\
1 & 2 & 2 & 1 & 2 & 1 & 0 & 0 & 1 & 2 & 1 & 2 & 2 & 1\\
0 & 2 & 2 & 2 & 1 & 2 & 1 & 1 & 2 & 1 & 2 & 2 & 2 & 0\\
0 & 0 & 1 & 0 & 2 & 1 & 2 & 2 & 1 & 2 & 0 & 1 & 0 & 0\\
0 & 0 & 0 & 1 & 2 & 2 & 0 & 0 & 2 & 2 & 1 & 0 & 0 & 0\\
0 & 0 & 1 & 0 & 2 & 1 & 2 & 2 & 1 & 2 & 0 & 1 & 0 & 0\\
0 & 2 & 2 & 2 & 1 & 2 & 1 & 1 & 2 & 1 & 2 & 2 & 2 & 0\\
1 & 2 & 2 & 1 & 2 & 1 & 0 & 0 & 1 & 2 & 1 & 2 & 2 & 1\\
0 & 2 & 2 & 2 & 1 & 1 & 1 & 1 & 1 & 1 & 2 & 2 & 2 & 0\\
0 & 0 & 1 & 0 & 0 & 2 & 0 & 0 & 2 & 0 & 0 & 1 & 0 & 0\\
\end{array}.\]
\end{samepage}

Indeed, this is not a coincidence. An $M \times N$ quiet pattern $Q$ is an array that corresponds to a polynomial $f(x, y)$ such that $s(x, y) = f(x,y)g(x,y)$, the ``state'' polynomial that corresponds to finding the state of the grid after the clicks in $Q$ (starting from the all-zero state), only has nonzero terms where the powers of $x$ are -1 or $N$ and powers of $y$ are -1 or $M$. For such an array, since $p$ is prime, the result of clicking its evolution is 
\begin{equation}
    \label{eqn:EvoQ}
    f(x^p, y^p)g(x, y)^{p-1}\cdot g(x, y) = f(x^p, y^p) g(x, y)^p = f(x^p, y^p)g(x^p, y^p) = s(x^p, y^p),
\end{equation}
which is the same state polynomial except with its terms spread out $p$ times. Thus, when $Q$ is a quiet pattern, $s$ only has nonzero terms where the powers of $x$ are $-p$ or $Np$ and powers of $y$ are $-p$ or $Mp$, so $f(x^p, y^p)g(x, y)^{p-1}$ forms a quiet $(Mp+p-1) \times (Np+p-1)$ pattern. 

More generally, the resulting state (starting from all-zero) after clicking $\Evo_p(Q)$ has 0's everywhere except at the centers of the primitive evolution patterns that replace the single squares in $Q$ (see Remark \ref{rmk:visualizingEvo}), and the resulting states in those centers are exactly the resulting states of the corresponding entries in $Q$. Let us state this more precisely. The following theorem can be proven using \eqref{eqn:EvoQ}, but we will also give a more visual / conceptual proof below.

\begin{thm}
\label{thm:Q<=>Evo(Q)}
Let $p$ be prime. The evolution $\Evo_p(Q)$ is a quiet pattern if and only if $Q$ is a quiet pattern. 

In fact, if we number the rows of $Q$ as $1, 2, \dots, m$ and columns $1, 2, \dots, n$, and number the rows of $\Evo_p(Q)$ as $1, 2, \dots, M = p(m+1) - 1$ and columns $1, 2, \dots, N = p(n+1)-1$, then the resulting change in state of square $(i, j)$ after clicking $Q$ is the same as that of square $(pi, pj)$ after clicking $\Evo_p(Q)$ (for all $i = 1, 2, \dots, m$, $j = 1, 2, \dots, n$). All other states after clicking $\Evo_p(Q)$ remain unchanged.
\end{thm}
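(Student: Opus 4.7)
My plan is to leverage the polynomial representation already set up, together with the Frobenius identity $g(x,y)^p \equiv g(x^p,y^p)\pmod p$ that powers \cref{prop:GenMikado}. Write $Q$ as $f(x,y)=\sum c_{ij}x^iy^j$, so that the state polynomial induced by clicking $Q$ (on an infinite grid, ignoring boundaries) is $s(x,y)=f(x,y)g(x,y)$. By \cref{defn:EvoQ}, $\Evo_p(Q)$ corresponds to the polynomial $f(x^p,y^p)g(x,y)^{p-1}$, so its induced state polynomial is
\[
\bigl[f(x^p,y^p)g(x,y)^{p-1}\bigr]\cdot g(x,y) = f(x^p,y^p)g(x,y)^p \equiv f(x^p,y^p)g(x^p,y^p)=s(x^p,y^p)\pmod p.
\]
This single computation already contains essentially all the content; what remains is unpacking it.

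From the identity, the monomial $x^ay^b$ appears in the state polynomial of $\Evo_p(Q)$ exactly when $p\mid a$ and $p\mid b$, and then with coefficient $s_{a/p,b/p}$. Translating to the grid-coordinate conventions of the theorem (the center of the primitive block that replaced cell $(i,j)$ of $Q$ sits at position $(pi,pj)$), this says that after clicking $\Evo_p(Q)$ the state at square $(pi,pj)$ equals the state at $(i,j)$ after clicking $Q$, and every other square of $\Evo_p(Q)$ is zero --- precisely the ``in fact'' statement. For the biconditional, I would interpret ``quiet'' polynomially: a pattern is quiet iff its state polynomial has no monomials with in-bounds exponents. The substitution $x\mapsto x^p$, $y\mapsto y^p$ sends the boundary exponents $-1,n$ of $s$ to $-p,np$ and $-1,m$ to $-p,mp$, which are exactly the boundary exponents for the $(p(m+1)-1)\times(p(n+1)-1)$ grid of $\Evo_p(Q)$; meanwhile every in-bounds exponent of the evolved grid not of the form $(pi,pj)$ automatically carries coefficient $0$. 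Hence $\Evo_p(Q)$ is quiet iff $s_{ij}=0$ for every in-bounds $(i,j)$ of $Q$, iff $Q$ is quiet.

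For the promised ``more visual / conceptual proof,'' I would recast the computation using \cref{prop:GenMikado} directly: each cell with value $k_0$ in $Q$ contributes an $\Evo_p(k_0)$-block centered at $(pi,pj)$, which by that proposition lights up exactly five squares with value $k_0$ --- the center and the four neighbors $(p(i\pm 1),pj),(pi,p(j\pm 1))$. Summing contributions, the state stacked at $(pi,pj)$ is exactly $c_{ij}+c_{i\pm 1,j}+c_{i,j\pm 1}$, which is the state of $(i,j)$ after clicking $Q$; no contribution lands at a point not of the form $(pa,pb)$, so all other states vanish. The only real subtlety --- and the step I would double-check most carefully --- is the boundary bookkeeping: verifying that the ``out-of-bounds'' monomials of $s$ land precisely at the boundary columns/rows of the evolved grid under $x\mapsto x^p$, $y\mapsto y^p$, so that quietness is preserved on the nose rather than off by one.
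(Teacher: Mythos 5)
Your proposal is correct and matches the paper: your lead polynomial computation is exactly the identity in \eqref{eqn:EvoQ}, which the paper notes proves the theorem, and your secondary argument via \cref{prop:GenMikado} (each cell's primitive block lighting five squares whose vertices land at centers of neighboring blocks or just off the grid) is the paper's own ``visual'' proof. Your explicit check that the boundary exponents $-1, n$ and $-1, m$ map to $-p, np$ and $-p, mp$ under $x\mapsto x^p$, $y\mapsto y^p$ is a nice piece of bookkeeping the paper leaves implicit, but it is the same argument.
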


\eg Say $p = 3$ and the configuration of clicks and resulting states from $Q$ is as follows:
\begin{center}
\begin{tabular}{ccc}
\underline{Clicks} & & \underline{Result}\\
$\begin{array}{ccc}
1 & 0 & 2\\
2 & 1 & 2
\end{array}$
& $\qquad \Ra \qquad$ & $\begin{array}{ccc}
0 & 1 & 0\\
1 & 2 & 2
\end{array}$
\end{tabular}
\end{center}
Then the corresponding clicks and result after the evolution are:
\begin{center}
\begin{tabular}{ccc}
\underline{Clicks} & & \underline{Result}\\
$\begin{array}{ccccccccccc}
0 & 0 & 1 & 0 & 0 & 0 & 0 & 0 & 2 & 0 & 0\\
0 & 2 & 2 & 2 & 0 & 0 & 0 & 1 & 1 & 1 & 0\\
1 & 2 & 2 & 2 & 1 & 0 & 2 & 1 & 1 & 1 & 2\\
0 & 2 & 1 & 2 & 0 & 1 & 0 & 1 & 0 & 1 & 0\\
0 & 1 & 2 & 1 & 2 & 2 & 2 & 1 & 0 & 1 & 0\\
2 & 1 & 1 & 2 & 1 & 2 & 1 & 2 & 1 & 1 & 2\\
0 & 1 & 1 & 1 & 2 & 2 & 2 & 1 & 1 & 1 & 0\\
0 & 0 & 2 & 0 & 0 & 1 & 0 & 0 & 2 & 0 & 0\\
\end{array}$
& $\qquad \Ra \qquad$ & $\begin{array}{ccccccccccc}
0 & 0 & 0 & 0 & 0 & 0 & 0 & 0 & 0 & 0 & 0\\
0 & 0 & 0 & 0 & 0 & 0 & 0 & 0 & 0 & 0 & 0\\
0 & 0 & \color{blue}{\mathbf{0}} & 0 & 0 & \color{blue}{\mathbf{1}} & 0 & 0 & \color{blue}{\mathbf{0}} & 0 & 0\\
0 & 0 & 0 & 0 & 0 & 0 & 0 & 0 & 0 & 0 & 0\\
0 & 0 & 0 & 0 & 0 & 0 & 0 & 0 & 0 & 0 & 0\\
0 & 0 & \color{blue}{\mathbf{1}} & 0 & 0 & \color{blue}{\mathbf{2}} & 0 & 0 & \color{blue}{\mathbf{2}} & 0 & 0\\
0 & 0 & 0 & 0 & 0 & 0 & 0 & 0 & 0 & 0 & 0\\
0 & 0 & 0 & 0 & 0 & 0 & 0 & 0 & 0 & 0 & 0\\
\end{array}$
\end{tabular}
\end{center}

\begin{proof}[Proof of Theorem \ref{thm:Q<=>Evo(Q)}]
Note that the second paragraph of Theorem \ref{thm:Q<=>Evo(Q)} implies the first. By Proposition \ref{prop:GenMikado}, the resulting clicks in $\Evo(Q)$ produce a net of 0 clicks everywhere except at the centers of the primitive evolution patterns replacing the single squares, and the result at a given center of an evolution is precisely the sum of the clicks in the corresponding square from $Q$ and the clicks in the squares adjacent to it in $Q$, since the vertex of a diamond corresponding to a primitive evolution pattern lands either precisely at the edge of the grid or at the center of the adjacent primitive evolution pattern. Thus, the resulting states of the centers in $\Evo(Q)$ are precisely the same as the resulting states of the squares in $Q$.
\end{proof}

\rmk The statement of Proposition \ref{prop:GenMikado} holds true as we keep evolving the resulting pattern into the next, since the $n$-th evolution $\Evo^{(n)}(k_0)$ corresponds to the polynomial $k_0(x + y + 1 + x^{-1} + y^{-1})^{p^n} \equiv k_0(x^{p^n} + y^{p^n} + 1 + x^{-p^n} + y^{-p^n}) \pmod p$ for prime $p$. So as we evolve the pattern further and further, the five squares that end up with nontrivial states as a result of the clicks drift $p$ times farther apart each time (the middle staying put and the other four fleeing faster and faster).

%*****QUESTION*****: is there a better way to define the inverse map? For a pattern of clicks $f(x, y)$, as the $p$-th root of the state polynomial $f(x, y)g(x, y)$ if it exists?

\begin{thm}
\label{thm:Era_p}
Let $p$ be a prime number. For any $M, N \equiv -1 \pmod p$ (with $M, N > p$), there is a map $\Era_p$ that takes $M \times N$ quiet patterns to $(M+1-p)/p \times (N+1-p)/p$ quiet patterns. The map simply erases all rows and columns whose indices are not multiples of $p$ (where rows and columns are 1-indexed).
\end{thm}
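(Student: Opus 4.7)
The plan is to exploit the Laurent-polynomial framework of Section 2. I would write the $M \times N$ quiet pattern $Q'$ as a polynomial $F(x,y) \in \F_p[x,y]$ (with $x$-degree at most $N-1$ and $y$-degree at most $M-1$), so that $Q'$ being quiet is equivalent to saying that $B := Fg$ has nonzero coefficients only at monomials $x^i y^j$ with $i \in \{-1, N\}$ or $j \in \{-1, M\}$ (the four ``boundary lines''). Writing $m = (M+1)/p - 1$ and $n = (N+1)/p - 1$, and letting $\tilde F$ denote the polynomial of $\Era_p(Q')$ on the resulting $m \times n$ grid (so that $[x^{i-1}y^{j-1}] \tilde F = [x^{pi-1}y^{pj-1}] F$ for $1 \le i \le n$, $1 \le j \le m$), the goal becomes showing that $\tilde F \cdot g$ has no interior coefficients on the small grid.

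The engine of the proof is the Freshman's Dream in characteristic $p$, namely $g(x,y)^p \equiv g(x^p, y^p) \pmod p$, which yields
\[F(x,y)\,g(x^p,y^p) \;\equiv\; F(x,y)\,g(x,y)^p \;=\; B(x,y)\,g(x,y)^{p-1} \pmod p.\]
Expanding $g$ and $g(x^p,y^p)$ directly and matching the five ``sub-lattice neighbour'' terms around $(pi-1,pj-1)$ in the big grid with the five ordinary-neighbour terms around $(i-1,j-1)$ in the small grid gives the bookkeeping identity
\[[x^{i-1} y^{j-1}] \bigl(\tilde F \cdot g\bigr) \;=\; [x^{pi-1} y^{pj-1}] \bigl(F\,g(x^p,y^p)\bigr) \;\equiv\; [x^{pi-1} y^{pj-1}] \bigl(B\,g^{p-1}\bigr) \pmod p\]
for $1 \le i \le n$, $1 \le j \le m$. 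So the problem reduces to a support calculation for $B \cdot g^{p-1}$.

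Since $g^{p-1}$ is supported in the diamond $\{(a,b) : |a|+|b|\le p-1\}$, each term of $B$ at a boundary monomial contributes to $B\,g^{p-1}$ only at monomials whose relevant coordinate is within $p-1$ of the boundary value. A term on the left/right boundary ($x \in \{-1, N\}$) contributes at $x$-coordinates in $[-p,\,p-2]\cup[pn,\,pn+2p-2]$, and analogously a top/bottom boundary term contributes at $y$-coordinates in $[-p,\,p-2]\cup[pm,\,pm+2p-2]$. I would verify that neither of these ranges contains an interior sub-lattice coordinate $pi - 1 \in [p-1,\,pn-1]$ (respectively $pj - 1 \in [p-1,\,pm-1]$), which works out cleanly precisely because the hypothesis $M,N \equiv -1 \pmod p$ aligns the boundary lines with the ``gaps'' between the retained multiples of $p$. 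Hence the right-hand side of the bookkeeping identity vanishes, $\tilde F \cdot g$ has no interior coefficients, and $\Era_p(Q')$ is a quiet pattern.

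The only nontrivial step is this final support computation; the Frobenius identity and the bookkeeping identity are immediate. If I were presenting the proof, I would spend most of the effort carefully writing out the four ``spread intervals'' against the interior sub-lattice, since that is where the hypothesis $M,N\equiv -1\pmod p$ actually enters.
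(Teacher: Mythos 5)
Your proof is correct, but it runs on different machinery than the paper's. The paper stays in the linear-algebra world: it takes the local quiet-pattern relation $x_{i,j}+x_{i\pm1,j}+x_{i,j\pm1}\equiv 0$, applies it repeatedly with the coefficients produced at each stage (which amounts to computing $A^{p}$ applied to a delta vector), arrives at the distance-$p$ stencil relation $x_{i,j}+x_{i\pm p,j}+x_{i,j\pm p}\equiv 0$ at each retained square $(pa,pb)$, and then argues that all intermediate centers stay inside the $M\times N$ grid until the final application, which is where $M,N\equiv -1\pmod p$ enters. Your route packages the same two ingredients algebraically: the identity $g(x,y)^p\equiv g(x^p,y^p)$ replaces the $A^p$ computation, and your support calculation for $B\,g^{p-1}$ against the sub-lattice coordinates $pi-1$ replaces the paper's "terms never leave the grid until the last step" argument. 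The trade-off is that your version makes the boundary bookkeeping fully explicit and verifiable (the four spread intervals $[-p,p-2]$, $[pn,pn+2p-2]$, etc., visibly miss $\{p-1,2p-1,\dots\}$), and it cleanly isolates exactly where the congruence hypothesis is used; the paper's version is more self-contained for a reader thinking of quiet patterns as solutions of local linear relations, and it reuses the $A^{p}$ observation already established in the proof of Theorem~\ref{thm:A^kG}. One small point worth stating explicitly when you write this up: the bookkeeping identity $[x^{i-1}y^{j-1}](\tilde F\cdot g)=[x^{pi-1}y^{pj-1}](F\cdot g(x^p,y^p))$ silently uses that $F$ vanishes at $x$-exponent $-1$ and at $x$-exponent $N=p(n+1)-1$ (and likewise in $y$), so that the truncation to the small grid and the restriction to the sub-lattice commute; this is automatic from the degree bounds on $F$ but should be said.
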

\begin{proof}
Let $x_{ij}$ be the clicks in the square in position $(i, j)$, and suppose for the moment that we had a quiet pattern in an infinite grid. Note that by the definition of a quiet pattern we have
\begin{equation}
    \label{eqn:qp_rel}
    x_{i, j} + x_{i-1, j} + x_{i+1, j} + x_{i, j-1} + x_{i, j+1} \equiv 0 \pmod{p}.
\end{equation}
We will compactify the notation slightly by writing $x_{i\pm 1, j}$, e.g. to mean $x_{i+1, j} + x_{i-1, j}$, turning Equation \eqref{eqn:qp_rel} into 
\begin{equation}
\label{eqn:qp_rel_short}
x_{i, j} + x_{i\pm 1, j} + x_{i, j\pm1} \equiv 0 \pmod p.    
\end{equation}
Now, apply \eqref{eqn:qp_rel} again centered at each of the squares corresponding to the entries in \eqref{eqn:qp_rel_short} to find
\begin{equation}
    \label{eqn:qp_rel2}
    x_{i\pm 2, j} + x_{i, j \pm 2} + 2(x_{i \pm 1, j - 1} + x_{i \pm 1, j+1} + x_{i\pm 1, j} + x_{i, j \pm 1} + x_{i, j}) \equiv 0 \pmod{p},
\end{equation}
Apply \eqref{eqn:qp_rel} again, centered at each of the squares corresponding to the entries in \eqref{eqn:qp_rel2}, multiplied by the coefficients in \eqref{eqn:qp_rel2}. If we repeat this process a total of $p$ times, the result will be:
\begin{equation}
    \label{eqn:qp_rel_p}
    x_{i, j} + x_{i-p, j} + x_{i+p, j} + x_{i, j-p} + x_{i, j+p} \equiv 0 \pmod{p}.
\end{equation}
This is because each application of \eqref{eqn:qp_rel} to the previous equation is equivalent to multiplying the activation matrix for the puzzle by the vector of constants specified in the equation, starting with a solo square in the middle. This is only being done on finitely many terms, so we can think of this as finite-dimensional linear algebra. Applying the activation matrix $p$ times produces $A^p$, the result of which (as seen in the proof of Theorem \ref{thm:A^kG}) consists precisely of the square at the center and the squares horizontally and vertically $p$ away from the center.

But here, we do not have an infinite grid; instead we have a finite $M \times N$ grid, so we can start at each place $(i, j) = (pa, pb)$ with $a = 1, 2, \dots, (M+1-p)/p$ and $b = 1, 2, \dots, (N+1-p)/p$, and note that all of the terms that come into play lie in the $M \times N$ grid up until the very last application, at which point we may have terms that are just outside of the $M \times N$ grid; this amounts to treating those terms as 0's. In total, this implies that the relations \eqref{eqn:qp_rel_p} hold for the squares of the form $(pa, pb)$ just mentioned, so if we erase all the squares but these, we will still have a quiet pattern. 
\end{proof}

Note that if we try to erase a different set of rows and columns, or if we start out with a size $M \times N$ where $M$ or $N$ are not $-1 \pmod p$ then at some point we will end up with a relation that contains terms outside of the grid and we will need to apply \eqref{eqn:qp_rel} for those terms; since these terms are set to 0, they cannot affect other squares and the relationship will no longer hold.

\begin{thm}
\label{thm:EraEvo=cp}
Let $k = p$ be a prime number. Applying evolution $\Evo_p$ to a pattern of clicks and then applying $\Era_p$ is equivalent to multiplying the original pattern by $c_p$. In symbols:
\[\Era_p \circ \Evo_p = c_p\]
\end{thm}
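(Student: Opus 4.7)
The plan is to work entirely within the Laurent polynomial language of \cref{defn:EvoQ}: the whole identity becomes a one-line coefficient extraction. Encode $Q$ as the polynomial $f(x,y) = \sum_{i,j} c_{ij} x^i y^j$, so that $\Evo_p(Q)$ is represented by
\[
E(x,y) \,=\, f(x^p, y^p)\, g(x,y)^{p-1}.
\]
By the indexing convention from \cref{thm:Q<=>Evo(Q)}, the entry $c_{ij}$ of $Q$ lives at position $(pi, pj)$ of $\Evo_p(Q)$, so applying $\Era_p$ amounts (after the coordinate rescaling $(pi, pj) \leftrightarrow (i,j)$) to keeping only the coefficients of monomials $x^a y^b$ in $E$ for which both $a$ and $b$ are divisible by $p$.

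Let $\pi$ denote this linear projection onto monomials with both exponents divisible by $p$. Since $f(x^p, y^p)$ is already supported on such monomials, it is immediate from distributing a product term-by-term that $\pi(f(x^p,y^p)\cdot h) = f(x^p, y^p)\cdot \pi(h)$ for any Laurent polynomial $h$. Thus
\[
\pi(E) \,=\, f(x^p, y^p)\cdot \pi\bigl(g(x,y)^{p-1}\bigr).
\]
The key observation is a degree bound: every monomial in $g(x,y)^{p-1} = (x + y + 1 + x^{-1} + y^{-1})^{p-1}$ has each of its exponents of absolute value at most $p-1$. Hence the \emph{only} monomial in $g^{p-1}$ whose exponents are both divisible by $p$ is the constant term, which equals $c_p$ by \cref{defn:c_p}. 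Therefore $\pi(g^{p-1}) = c_p$, and so $\pi(E) = c_p \cdot f(x^p, y^p)$. Reinterpreting $f(x^p, y^p)$ on the original $m \times n$ grid via the rescaling $(pi, pj) \leftrightarrow (i, j)$ yields exactly $c_p\cdot Q$, which is the desired identity.

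I do not anticipate any substantive obstacle; the argument is essentially just unwinding the definitions. The only bookkeeping subtlety is confirming that $\Era_p$ really does realize the projection $\pi$ on the polynomial side---this follows from the indexing convention of \cref{thm:Q<=>Evo(Q)} together with the fact that, by the same degree bound applied across each overlap of two neighboring primitive diamonds, no spurious nonzero entries of $\Evo_p(Q)$ appear in rows or columns whose index is a multiple of $p$ outside the range $\{p, 2p, \ldots, mp\}\times\{p, 2p, \ldots, np\}$, so the output of $\Era_p$ truly is an $m \times n$ array.
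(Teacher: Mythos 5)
Your argument is correct. The paper proves the identity pictorially rather than algebraically: by \cref{rmk:visualizingEvo}, $\Evo_p(Q)$ is the superposition of the primitive patterns $\Evo_p(x_{ij})$ centered at the positions $(pi,pj)$; no primitive pattern reaches the center of a neighboring one, so each such center carries exactly $x_{ij}\cdot c_p$, and $\Era_p$ retains precisely those centers. Your Laurent-polynomial version is the algebraic rendering of the same mechanism: the bound $|a|,|b|\le p-1$ on the exponents occurring in $g(x,y)^{p-1}$ is exactly the statement that a primitive diamond cannot reach the neighboring centers (which sit $p$ apart), and extracting the coefficient of $x^{p\alpha}y^{p\beta}$ from $f(x^p,y^p)g(x,y)^{p-1}$ is the polynomial form of ``only one pattern contributes to each center.'' What your route buys is that the non-overlap claim, which the paper asserts by inspection of the picture, is forced by a one-line degree count, so the proof needs nothing beyond \cref{defn:EvoQ} and \cref{defn:c_p}; what the paper's route buys is reuse of the superposition picture it has already set up for \cref{thm:Q<=>Evo(Q)}. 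Your closing bookkeeping worry is vacuous in a harmless way: since $\Evo_p(Q)$ has $p(m+1)-1$ rows and $p(n+1)-1$ columns, the row and column indices divisible by $p$ are exactly $p,2p,\dots,mp$ and $p,2p,\dots,np$, so the output of $\Era_p$ is automatically an $m\times n$ array and your projection $\pi$ coincides with $\Era_p$ on the nose (after the uniform shift identifying exponent $a$ with index $a+p$).
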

\begin{proof}
Let $Q$ be the starting $m \times n$ grid, with clicks $x_{ij}$ in positions $(i, j)$. By definition, $\Evo_p(Q)$ is an $M \times N$ grid that is a superposition of the $\Evo_p(x_{ij})$ patterns, which have $x_{ij} \cdot c_p$ in the middle, with $M = p(m+1) - 1$ and $N = p(n+1) - 1$. Only the $\Evo_p(x_{ij})$ pattern contributes to the central square (the patterns do not overlap there), and the erasing procedure $\Era_p$ leaves precisely those centers, so the new grid, which is precisely $m \times n$ again, has precisely the numbers $x_{ij} \cdot c_p$ in the $(i, j)$ position.
\end{proof}

We can easily write down a combinatorial expression for $c_p$ as the constant coefficient of $(x + x^{-1} + 1 + y + y^{-1})^{p-1}$: if we take $b$ $x$'s, $c$ 1's, and $d$ $y$'s, then we must also take $b$ $x^{-1}$ terms and $d$ $y^{-1}$ terms, so we have:
\begin{equation}
    \label{c_k_comb}
    c_k = \sum_{2b + c + 2d = k-1} \frac{(k-1)!}{(b!)^2c!(d!)^2}
\end{equation}
In particular, for odd $p$, $c$ must be even, so replacing $c$ by $2c$ above, we have
\begin{equation}
    c_p = \sum_{c=0}^{(p-1)/2} \sum_{b=0}^{(p-1)/2-c} \frac{(p-1)!}{(b!)^2(2c)!\left(\left(\frac{p-1}2 - b - c\right)!\right)^2}
\end{equation}

\rmk \label{c_p^n} The number of clicks $c_p^{(n)}$ in the center of $\Evo_p^{(n)}(1)$ is just $c_p^n$, since in evolving from $\Evo_p^{(n-1)}(1)$, the middle square is replaced by the primitive pattern $\Evo_p(c_p^{(n-1)})$. The evolutions of the other squares do not effect the middle, so the middle just becomes $c_p^{(n-1)}\cdot c_p$.

\eg Since $c_2^{(n)} = 1$ for all $n$ ($c_2 = 1$), replacing $k$ by $2^n$ in \eqref{c_k_comb}, the number
\[\sum_{2b + c + 2d = 2^n-1} \frac{(2^n-1)!}{(b!)^2c!(d!)^2}\]
is always odd.

\eg Similarly, since $c_3 = 2$ and $c_5 = 1$, we can say 
\[\sum_{2b + c + 2d = 3^n-1} \frac{(3^n-1)!}{(b!)^2c!(d!)^2} \equiv (-1)^n \pmod 3 \qquad \text{and} \qquad \sum_{2b + c + 2d = 5^n-1} \frac{(5^n-1)!}{(b!)^2c!(d!)^2} \equiv 1 \pmod 5.\]

Table \ref{tab:c_p} shows the numbers $c_p$ for small primes $p$; these are the OEIS sequence A275745 (see Remark \ref{rmk:OEIS}):

\begin{table}[htbp]
    \centering
    \begin{tabular}{|c|c|c|c|c|c|c|c|c|c|c|c|c|c|c|c|c|c|c|c|c|c|}
        \hline
        $p$ & 2 & 3 & 5 & 7 & 11 & 13 & 17 & 19 & 23 & 29 & 31 & 37 & 41 & 43 & 47 & 53 & 59 & 61 & 67 & 71 & 73 \\
        \hline
        $c_p$ & 1 & 2 & 1 & 0 & 7 & 11 & 2 & 4 & 0 & 27 & 0 & 27 & 10 & 4 & 8 & 43 & 55 & 59 & 12 & 63 & 10\\
        \hline
    \end{tabular}
    \caption{The numbers $c_p$ of clicks in the center of $\Evo_p(1)$ for small primes $p$.}
    \label{tab:c_p}
\end{table}

The primes for which $c_p = 0$ are: 7, 23, 31, 79, 167, 431, 479, 983, 1303, $\dots$ (this is OEIS A275777).

\section{Elliptic Curves}
\label{sect:EllCurves}
To learn more about these numbers $c_p$, we first clear the denominator of $g(x, y) = 0$:
\begin{equation}
    \label{eqn:g_den_cleared}
    E: x^2y + xy^2 + xy + y + x = 0
\end{equation}
and projectivize it:
\begin{equation}
    \label{eqn:g_proj}
    \tilde E: f(x, y, z) = x^2y + xy^2 + xyz + yz^2 + xz^2 = 0
\end{equation}
% Let $(x, y)$ be a point satisfying $g(x, y) = 0$. Then this point is naturally identified with $(x, y, 1)$ on the projectivization $\tilde E$ from \cref{eqn:g_proj}.  
This is an elliptic curve with 4 rational points, 3 of which are on the line at infinity $z = 0$, and these form a subgroup isomorphic to $\Z/4\Z$ (see \cref{tab:triv_gp_a=1}). None of these actually give solutions to $g(x, y) = 0$ so we will call these four the \textit{trivial points} on $\tilde E$. We take $(-1, 1, 0)$ to be the identity on $\tilde E$. $E$ is isomorphic to the following Tate normal form for a point of order 4:
\begin{equation}
    \label{eqn:E_4(1)}
    E_4(1): y^2 + xy + y = x^3 + x^2
\end{equation}
with projectivization
\[\tilde E_4(1): y^2z + xyz + yz^2 = x^3 + x^2z.\]
Table \ref{tab:triv_gp_a=1} summarizes the points in the trivial group of order 4 on $\tilde E$ and $\tilde E_4(1)$:
\begin{table}[hbtp]
    \centering
    \begin{tabular}{ccc}
        Point on $\tilde E$ & Point on $\tilde E_4(1)$ & Order\\
        (1, 0, 0) & (0, -1, 1) & 4\\
        (0, 1, 0) & (0, 0, 1) & 4\\
        (0, 0, 1) & (-1, 0, 1) & 2 \\
        (-1, 1, 0) & (0, 1, 0) & 1\\
    \end{tabular}
    \caption{The trivial subgroup of order 4 on $\tilde E$ and $\tilde E_4(1)$}
    \label{tab:triv_gp_a=1}
\end{table}

The isomorphism $\tilde E \xr{\sim} \tilde E_4(1)$ that takes $(-1, 1, 0)$ on $\tilde E$ to $(0, 1, 0)$ on $\tilde E_4(1)$ is given by 
\[\phi: \tilde E \to \tilde E_4(1), \quad (x, y, z) \mapsto (-z, -x, x + y + z)\]
with inverse given by 
\[\psi: \tilde E_4(1) \to \tilde E, \quad (u, v, w) \mapsto (-v, u + v + w, -u).\]

Now we can use the theory of elliptic curves to understand the numbers $c_p$.
\begin{thm}
\label{thm:c_p_ell_curve}
The number $c_p$ is the Hasse invariant of the elliptic curve $E$ in \cref{eqn:g_den_cleared}. There are infinitely many primes $p$ for which $c_p = 0$, but the density of such primes is 0.
\end{thm}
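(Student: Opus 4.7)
The plan is to reduce the theorem to three classical facts about elliptic curves over $\mathbb{Q}$: (a) $c_p$ equals the Hasse invariant of $E$ modulo $p$; (b) Elkies's theorem provides infinitely many supersingular primes; (c) Serre's theorem yields density zero in the non-CM case.

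For (a), I would begin from \cref{defn:c_p}: $c_p$ is the coefficient of $x^0y^0$ in $g(x,y)^{p-1} \pmod p$. Multiplying through by $(xy)^{p-1}$ and writing $F(x,y) = xy \cdot g(x,y) = x^2y + xy^2 + xy + y + x$ for the affine equation \eqref{eqn:g_den_cleared} of $E$, we see $c_p$ equals the coefficient of $x^{p-1}y^{p-1}$ in $F^{p-1}$. Passing to the homogenization $\tilde F(x,y,z)$ of \eqref{eqn:g_proj}, a simple degree count finishes the setup: $\tilde F^{p-1}$ is homogeneous of total degree $3(p-1)$, so the unique monomial in it that restricts under $z = 1$ to $x^{p-1}y^{p-1}$ is $(xyz)^{p-1}$. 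Hence $c_p$ is the coefficient of $(xyz)^{p-1}$ in $\tilde F^{p-1}$. The classical Manin/Katz formula for the Hasse invariant of a smooth plane cubic -- arising from the Cartier operator acting on the canonical differential $\omega = dx/\tilde F_y\big|_{z=1}$ -- then identifies this coefficient with the Hasse invariant $A_p$ of $\tilde E$ over $\mathbb{F}_p$.

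For (b), Elkies (1987) proved that every elliptic curve over $\mathbb{Q}$ has infinitely many supersingular primes, producing infinitely many $p$ with $A_p = 0$ and hence $c_p = 0$. For (c), I would compute $j(E)$ from the Tate form $E_4(1)\colon y^2 + xy + y = x^3 + x^2$: the standard Weierstrass formulas give $b_2 = 5,\ b_4 = 1,\ b_6 = 1,\ b_8 = 1,\ c_4 = 1,$ and $\Delta = -15$, whence $j(E) = c_4^3/\Delta = -1/15$. Since $j(E) \notin \mathbb{Z}$, the curve $E$ has no complex multiplication (CM elliptic curves over $\mathbb{Q}$ have integer $j$-invariant), so Serre's theorem yields density $0$ for the set of supersingular primes of $E$. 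The only primes of bad reduction divide $\Delta$, i.e.\ $p = 3, 5$; these give $c_3 = 2$ and $c_5 = 1$ by \cref{tab:c_p}, and so do not affect the density statement.

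The main obstacle is Step (a): identifying the coefficient of $(xyz)^{p-1}$ in $\tilde F^{p-1}$ with the Hasse invariant for our non-Weierstrass model $\tilde E$. One can either cite Manin/Katz directly, or transport along the explicit isomorphism $\phi\colon \tilde E \to \tilde E_4(1)$ supplied in the paper and then invoke the classical Weierstrass formula ``Hasse invariant $=$ coefficient of $x^{p-1}$ in $f(x)^{(p-1)/2}$'' after completing the square in $E_4(1)$; either approach requires some bookkeeping of signs and units. For the density and infinitude conclusions, strictly speaking we only need $c_p \equiv 0 \pmod p \iff A_p \equiv 0 \pmod p$, which holds as soon as $c_p$ is a nonzero $\mathbb{F}_p$-multiple of $A_p$ and so sidesteps the sign issue entirely.
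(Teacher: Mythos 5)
Your proposal is correct and takes essentially the same route as the paper: identify $c_p$ with the coefficient of $(xyz)^{p-1}$ in the $(p-1)$-st power of the defining cubic and invoke the classical Hasse-invariant criterion (the paper cites Hartshorne IV.4.21--22, you cite the equivalent Manin/Katz formula), compute $j = -1/15$ from the Tate form to rule out CM, and conclude via Elkies for infinitude and Serre/Elkies for density zero. Your explicit degree count relating the constant term of $g^{p-1}$ to the $(xyz)^{p-1}$ coefficient, and your remarks on the bad primes $3,5$ and the unit ambiguity of the Hasse invariant, only fill in details the paper leaves implicit.
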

\begin{proof}
According to \citep[Prop. IV.4.21, Cor. IV.4.22]{Hartshorne}, the coefficient of $(xyz)^{p-1}$ in $f(x, y, z)^{p-1}$ is precisely the Hasse invariant of $\tilde E$. Since this is the same as the constant coefficient in $g(x, y)^{p-1}$, $c_p$ is precisely the Hasse invariant. A quick SageMath computation shows that $E_4(1)$ has $j$-invariant equal to -1/15 \citep[Theorem II.6.1]{silverman2013advanced}, which means it does not have complex multiplication. Elkies' theorem \citep{elkies1987existence} says that there are infinitely many primes at which any elliptic curve is supersingular; however, the actual set of supersingular primes has density 0 \citep{silverman2009arithmetic, elkies1992distribution}.
\end{proof}
\rmk\label{rmk:OEIS} The Online Encyclopedia of Integer Sequences (OEIS) has at least three sequences directly relating to the $\Evo$ construction, specifically the sequences related to the elliptic curve
\[E': y^2 + xy + y = x^3 + x^2 - 10x - 10 \pmod p.\]
Another quick Sagemath computation shows $E_4(1)$ and $E'$ are isogenous. In fact, the isogeny from the Tate normal form $E_4(1)$ to $E'$ is given by taking the kernel to be the group of 4 trivial points shown in Table \ref{tab:triv_gp_a=1}. This means the two curves have the same Hasse invariants, so the sequence $\{c_p\}$ giving the number of clicks in the center of $\Evo_p(1)$ is congruent to OEIS sequence A275745 $\pmod p$, which gives the Hasse invariant for $E'$. The number of points on $E'$ and $E_4(1) \pmod p$ is given by A275742. Note that the number of points on $E \pmod p$ is two less than A275742 because the curves $E'$ and $E_4(1)$ are in Weierstrass form and have only one point at infinity, whereas three of the points in Table \ref{tab:triv_gp_a=1} are infinite on $E$. Finally, the primes $p$ such that the primitive evolution pattern $\Evo_p(1)$ has a 0 in the middle, making $\Era_p \circ \Evo_p = 0$, are the sequence A275777 of primes $p$ such that there are exactly $p$ finite points on $E'$ over $\F_p$.
%The isogeny is explicitly given by ((x^4 + x^3 + 2*x^2 + 2*x + 1)/(x^3 + x^2), (x^5*y + 2*x^4*y - 2*x^4 - x^3*y - 5*x^3 - 4*x^2*y - 6*x^2 - 5*x*y - 4*x - 2*y - 1)/(x^5 + 2*x^4 + x^3))

Now, let us explore how this elliptic curve relates to quiet patterns in the Lights Out Puzzle. Consider an infinite grid $G$, stretching infinitely far in all four directions from a fixed starting point with coordinates $(0, 0)$. Let $p$ be the number of colors in the puzzle, as before.
\begin{defn}
A quiet pattern in $G$ is a function $c: \Z \times \Z \to \Z/p\Z$, such that $c(i, j) + c(i-1, j) + c(i+1, j) + c(i, j-1) + c(i, j+1) \equiv 0 \pmod p$ for all $i, j \in \Z$.
\end{defn}

We can easily make infinitely (in fact, uncountably) many quiet patterns in $G$ by setting $c(0, j)$ and $c(1, j)$ in an arbitrary way for all $j \in \Z$ and then proceeding from there to ``chase the lights'' to the left and right of these two columns. To be explicit, by chasing the lights to the right, we mean that we recursively set $c(i, j)$ to be the unique value so that the state of square $(i-1, j)$ is 0, i.e. $c(i, j) = -c(i-1, j) - c(i-1, j+1) - c(i-1, j-1) - c(i-2, j) \pmod p$, starting with $i = 2$, and similarly we can chase the lights to the left. 

Our elliptic curve $\tilde E$ identifies a specific subset of these infinite quiet patterns that can be used to represent and visualize points on $\tilde E$:

\begin{prop}
Let $p$ be prime. Then any point $(x, y) \pmod p$ on the curve given by $x + y + 1 + x^{-1} + y^{-1} = 0$ induces the infinite quiet pattern given by $c(i, j) = x^iy^j$ in $G$:
%I don't think the following is true:  Then there is a one-to-one correspondence between the points $\pmod p$ on the curve given by $x + y + 1 + x^{-1} + y^{-1} = 0$ and quiet patterns $c$ in $G$ such that $c(i, j)c(-i, -j) = 1$ for all $i, j \in \Z$. 
\[\begin{array}{ccccccc}
\ddots & \vdots & \vdots & \vdots & \vdots & \vdots & \iddots\\
\cdots & x^{-2}y^2 & x^{-1}y^2 & y^2 & xy^2 & x^2y^2 & \cdots \\
\cdots & x^{-2}y & x^{-1}y & y & xy & x^2y & \cdots \\
\cdots & x^{-2} & x^{-1} & 1 & x & x^2 & \cdots \\
\cdots & x^{-2}y^{-1} & x^{-1}y^{-1} & y^{-1} & xy^{-1} & x^2y^{-1} & \cdots \\
\cdots & x^{-2}y^{-2} & x^{-1}y^{-2} & y^{-2} & xy^{-2} & x^2y^{-2} & \cdots \\
\iddots & \vdots & \vdots & \vdots & \vdots & \vdots & \ddots
\end{array}
\]

\end{prop}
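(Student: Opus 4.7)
The plan is to verify the quiet pattern condition directly by factoring. Fix a point $(x, y) \in (\F_p^\times)^2$ satisfying $x + y + 1 + x^{-1} + y^{-1} = 0$; note that $x$ and $y$ must be nonzero in $\F_p$ for the inverses in the equation to be defined, so $x^i y^j$ makes sense for every integer pair $(i, j)$ and the function $c : \Z \times \Z \to \F_p$ given by $c(i, j) = x^i y^j$ is well-defined.

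Next I would verify the quiet pattern relation at an arbitrary square $(i, j)$. Substituting the definition of $c$ and pulling the common factor $x^i y^j$ out of the five-term sum gives
\[
c(i, j) + c(i-1, j) + c(i+1, j) + c(i, j-1) + c(i, j+1) = x^i y^j \bigl(1 + x^{-1} + x + y^{-1} + y\bigr).
\]
The parenthesized factor is exactly $g(x, y)$ from \eqref{eqn:g}, which vanishes mod $p$ by the hypothesis that $(x, y)$ lies on the affine curve $g = 0$. Therefore the whole expression is $0 \pmod p$, which is precisely the quiet pattern condition stated in the definition preceding the proposition.

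There is really no hard step here: the entire point of the elliptic curve $E$ being obtained by clearing the denominator in $g(x,y) = 0$ is that its solutions are exactly the multiplicative parameters $(x, y)$ for which the monomial pattern $x^i y^j$ is annihilated by the Lights Out activation operator. The only mild subtlety is to observe that this works uniformly over the whole infinite grid $G$ because the factorization above depends on $(i, j)$ only through the scalar $x^i y^j$, and the inner sum is independent of $(i, j)$. Thus no boundary considerations intervene, and the resulting pattern genuinely extends to all of $\Z \times \Z$.
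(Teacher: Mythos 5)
Your proof is correct and is essentially the paper's own argument: both verify the quiet-pattern condition at an arbitrary square by factoring the five-term sum as $x^i y^j\,g(x,y)$ and invoking $g(x,y)=0$. The added remark that $x, y \ne 0$ (so the monomials are defined for all $i, j$) is a fine, if minor, point of care.
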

\begin{proof}
The equation for the elliptic curve tells us that if we start with all squares at state 0, after clicking the quiet pattern the state of each square, say at coordinates $(i, j)$, will be $x^{i+1}y^j + x^iy^{j+1} + x^iy^j + x^{i-1}y^j + x^iy^{j-1} = x^iy^j(x+y+1+x^{-1}+y^{-1}) = 0$.
\end{proof}
%Think about then plugging in roots of unity for x and y!

\eg For $p = 2$, the equation \eqref{eqn:g_proj} does not give an elliptic curve and its only solutions are listed in \cref{tab:triv_gp_a=1}; none of these actually give any solutions to $g(x, y) = 0$.

\eg For $p = 3$, \eqref{eqn:g_proj} does not give an elliptic curve and has the one nontrivial solution in the projective plane $\mathbb{P}^2$: $(x, y, z) = (2, 2, 1)$ (we always scale the $z$ coordinate to 1, so in this case this gives the point $(2, 2)$ on $g(x, y) = 0$). Indeed, if we put $c(i, j) = 2^i2^j = 2^{i+j} \pmod 3$, this gives us the following infinite checkerboard quiet pattern $\pmod 3$.
\[\begin{array}{ccccccccc}
\ddots & \vdots & \vdots & \vdots & \vdots & \vdots & \vdots & \vdots & \iddots\\ 
\cdots & 1 & 2 & 1 & 2 & 1 & 2 & 1 & \cdots \\
\cdots & 2 & 1 & 2 & 1 & 2 & 1 & 2 & \cdots \\
\cdots & 1 & 2 & 1 & 2 & 1 & 2 & 1 & \cdots \\
\cdots & 2 & 1 & 2 & 1 & 2 & 1 & 2 & \cdots \\
\cdots & 1 & 2 & 1 & 2 & 1 & 2 & 1 & \cdots \\
\cdots & 2 & 1 & 2 & 1 & 2 & 1 & 2 & \cdots \\
\cdots & 1 & 2 & 1 & 2 & 1 & 2 & 1 & \cdots \\
\iddots & \vdots & \vdots & \vdots & \vdots & \vdots & \vdots & \vdots & \ddots
\end{array}\]

\eg For $p = 5$, $\tilde E$ is still not an elliptic curve over $\F_p$ and again has one nontrivial point: $(x, y, z) = (1, 1, 1)$; if we put $c(i, j) = 1$ for all $i, j$ then we get the constant infinite quiet pattern $\pmod 5$:

\[\begin{array}{ccccccc}
\ddots & \vdots & \vdots & \vdots & \vdots & \vdots & \iddots\\ 
\cdots & 1 & 1 & 1 & 1 & 1 & \cdots \\
\cdots & 1 & 1 & 1 & 1 & 1 & \cdots \\
\cdots & 1 & 1 & 1 & 1 & 1 & \cdots \\
\cdots & 1 & 1 & 1 & 1 & 1 & \cdots \\
\cdots & 1 & 1 & 1 & 1 & 1 & \cdots \\
\iddots & \vdots & \vdots & \vdots & \vdots & \vdots & \ddots
\end{array}\]

\eg \label{eg:inf-qp-7} For $p = 7$, \eqref{eqn:g_proj} finally gives an elliptic curve, which has group $\Z/8\Z$ over $\F_p$ and the following nontrivial points: $(x, y, z) = (3, 6, 1), (6, 3, 1), (5, 6, 1), (6, 5, 1)$. This gives us four interesting quiet patterns: $c(i, j) = 3^i6^j, 6^i3^j, 6^i5^j, 5^i6^j$, all of which are symmetric to each other under the symmetries in the dihedral group $D_8$. Displayed below is the quiet pattern corresponding to $c(i, j) = 3^i6^j$.

\[\begin{array}{ccccccccc}
\ddots & \vdots & \vdots & \vdots & \vdots & \vdots & \vdots & \vdots & \iddots\\
\cdots & 1 & 3 & 2 & 6 & 4 & 5 & 1 & \cdots \\
\cdots & 6 & 4 & 5 & 1 & 3 & 2 & 6 & \cdots \\
\cdots & 1 & 3 & 2 & 6 & 4 & 5 & 1 & \cdots \\
\cdots & 6 & 4 & 5 & 1 & 3 & 2 & 6 & \cdots \\
\cdots & 1 & 3 & 2 & 6 & 4 & 5 & 1 & \cdots \\
\cdots & 6 & 4 & 5 & 1 & 3 & 2 & 6 & \cdots \\
\cdots & 1 & 3 & 2 & 6 & 4 & 5 & 1 & \cdots \\
\iddots & \vdots & \vdots & \vdots & \vdots & \vdots & \vdots & \vdots & \ddots
\end{array}\]

\eg \label{eg:inf-qp-11} For $p = 11$, $\tilde E(\F_p) \cong \Z/16\Z$ and there are only two orbits of quiet patterns under the action of $D_8$. One orbit, of size four, has representative $(1, 2, 1)$ giving the quiet pattern $c(i, j) = 2^j$ (on the left below), and the other orbit has size 8 and representative $(4, 5, 1)$ giving the quiet pattern $c(i, j) = 4^i5^j$ (on the right).

\[\begin{array}{ccccccccc}
\ddots & \vdots & \vdots & \vdots & \vdots & \vdots & \vdots & \vdots & \iddots\\
\cdots & 8 & 8 & 8 & 8 & 8 & 8 & 8 & \cdots \\
\cdots & 4 & 4 & 4 & 4 & 4 & 4 & 4 & \cdots \\
\cdots & 2 & 2 & 2 & 2 & 2 & 2 & 2 & \cdots \\
\cdots & 1 & 1 & 1 & 1 & 1 & 1 & 1 & \cdots \\
\cdots & 6 & 6 & 6 & 6 & 6 & 6 & 6 & \cdots \\
\cdots & 3 & 3 & 3 & 3 & 3 & 3 & 3 & \cdots \\
\cdots & 7 & 7 & 7 & 7 & 7 & 7 & 7 & \cdots \\
\iddots & \vdots & \vdots & \vdots & \vdots & \vdots & \vdots & \vdots & \ddots
\end{array} \qquad
\begin{array}{ccccccccc}
\ddots & \vdots & \vdots & \vdots & \vdots & \vdots & \vdots & \vdots & \iddots\\
\cdots & 9 & 3 & 1 & 4 & 5 & 9 & 3 & \cdots \\
\cdots & 4 & 5 & 9 & 3 & 1 & 4 & 5 & \cdots \\
\cdots & 3 & 1 & 4 & 5 & 9 & 3 & 1 & \cdots \\
\cdots & 5 & 9 & 3 & 1 & 4 & 5 & 9 & \cdots \\
\cdots & 1 & 4 & 5 & 9 & 3 & 1 & 4 & \cdots \\
\cdots & 9 & 3 & 1 & 4 & 5 & 9 & 3 & \cdots \\
\cdots & 4 & 5 & 9 & 3 & 1 & 4 & 5 & \cdots \\
\iddots & \vdots & \vdots & \vdots & \vdots & \vdots & \vdots & \vdots & \ddots
\end{array}
\]

We will now show how one can apply each symmetry in $D_8$ using the group law on the elliptic curve and the subgroup of trivial points on it. The point on $E_4(1)$ corresponding to $(x, y)$ is $\left(\frac{-1}{1+x+y}, \frac{-x}{1+x+y}\right)$. Using the standard formulas for the group law on an elliptic curve (see, e.g. \cite[p. 53-54]{silverman2009arithmetic}) and then using $\psi$ to come back to $E$, it is not hard to show the following.

\begin{prop}
\label{prop:sums_lead_to_triv_gp}
Let $g(x, y) = 0$. Then in the elliptic curve $E$ we have:
\begin{enumerate}[(a)]
    \item $(x, y, 1) + (y, x, 1) = (-1, 1, 0)$, i.e. $-(x, y, 1) = (y, x, 1)$
    \item $(x, y, 1) + (1/x, y, 1) = (0, 1, 0)$
    \item $(x, y, 1) + (x, 1/y, 1) = (1, 0, 0)$
    \item $(x, y, 1) + (1/y, 1/x, 1) = (0, 0, 1)$
\end{enumerate}
\end{prop}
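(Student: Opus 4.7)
The plan is to work directly on $E$ using its chord-tangent group law, rather than transferring to $\tilde E_4(1)$ as suggested in the preamble to the proposition. The key preliminary observation is that the identity $O := (-1,1,0)$ is a flex of $E$: since $\phi$ is a projective linear isomorphism sending $O$ to the flex $(0,1,0)$ of the Weierstrass curve $\tilde E_4(1)$, and flexes are preserved under projective linear maps, $O$ is a flex of $E$. (One can also verify directly that all three partials of $f(x,y,z) = x^2y + xy^2 + xyz + yz^2 + xz^2$ at $O$ equal $-1$, so the tangent line is $x+y+z = 0$, and substituting $z = -x-y$ into $f$ yields $(x+y)^3$, a triple intersection.) Consequently the group law on $E$ takes its cleanest form: three points of $E$ sum to $O$ if and only if they are collinear (with the usual intersection multiplicities).

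First I would pin down the inverses of the three trivial points other than $O$. The points $O, (1,0,0), (0,1,0)$ all lie on the line at infinity $z=0$, so they are the three intersections of that line with $E$, yielding $-(1,0,0) = (0,1,0)$ and $-(0,1,0) = (1,0,0)$. For $(0,0,1)$, the line through it and $O$ is $x + y = 0$; substituting $y = -x$, $z = 1$ into the affine equation gives $-x^2 = 0$, so $(0,0,1)$ meets $E$ on this line with multiplicity two. Hence $-(0,0,1) = (0,0,1)$, consistent with its being the unique 2-torsion point in the trivial subgroup $\Z/4\Z$.

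Each of the four identities then reduces to a one-line collinearity check. For (a), the $3 \times 3$ determinant with rows $(x,y,1), (y,x,1), (-1,1,0)$ expands to $-x - y + (y+x) = 0$. For (b) and (c), the triples $(x,y,1), (1/x,y,1), (1,0,0)$ and $(x,y,1), (x,1/y,1), (0,1,0)$ visibly lie on the projective lines $Y = yZ$ and $X = xZ$ respectively. For (d), expanding the determinant with rows $(x,y,1), (1/y, 1/x, 1), (0,0,1)$ along the third row gives $x\cdot(1/x) - y\cdot(1/y) = 0$. Combined with the inverse computations above, each collinearity yields the corresponding identity, e.g. $(x,y,1) + (1/x,y,1) = -(1,0,0) = (0,1,0)$.

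The main thing to be careful about is the degenerate cases: coincident summands (e.g. $x = y$ in (a), which would force the point to be 2-torsion) and lines that happen to be tangent to $E$ at one of the three points. Each of (a)--(d), however, is an equality of morphisms $E \to E$, and agreement on a dense open subset of an irreducible variety forces agreement everywhere, so the collinearity verifications above finish the proof.
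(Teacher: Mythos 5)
Your proof is correct, and it takes a genuinely different route from the paper's. The paper pushes everything over to the Tate normal form $E_4(1)$ via $\phi$ and verifies (d) by explicit Weierstrass addition formulas, including a separate doubling computation on the degenerate locus $(xy-1)(x+y)=0$, leaving (a)--(c) to the reader. You instead stay on $E$ itself: you observe that the chosen identity $(-1,1,0)$ is a flex (both via linearity of $\phi$ and via the direct check that the tangent $x+y+z=0$ meets $E$ in $(x+y)^3$), so the group law---which coincides with the one transferred through the linear map $\phi$ precisely because $\phi$ carries lines to lines---becomes ``collinear triples sum to the identity.'' Each of (a)--(d) then reduces to a one-line collinearity check, with the needed inverses $-(1,0,0)=(0,1,0)$ and $-(0,0,1)=(0,0,1)$ read off from the line at infinity $z=0$ (whose intersection with $E$ is cut out by $xy(x+y)$) and from the tangency of $x+y=0$ at the origin; all of these computations check out. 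Your handling of the degenerate configurations (coincident summands or unexpected tangencies) by rigidity---two morphisms of the irreducible curve $E$ agreeing on a dense open set agree everywhere---is legitimate, and it quietly absorbs exactly the case $xy=1$ that costs the paper its separate doubling computation; if one wanted to avoid the rigidity argument, those finitely many points could also be settled directly with tangent lines. The net comparison: your argument proves all four identities uniformly and with far less computation, at the price of invoking the flex criterion for the chord--tangent law, whereas the paper's proof is a self-contained formula-pushing verification written out only for (d).
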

\begin{proof}
    We prove only (d) and leave the rest to the reader as they are similar. We need to find the sum of  
    \[(x_1, y_1) = \left(\frac{-1}{1+x+y}, \frac{-x}{1+x+y}\right) \text{ and } (x_2, y_2) = \left(\frac{-1}{1+1/x+1/y}, \frac{-1/y}{1+1/x+1/y}\right) = \left(\frac{-xy}{x+y+xy}, \frac{-x}{x+y+xy}\right)\]
    on $E_4(1)$. First assume 
    \begin{equation}
    \label{eqn:assume_so_x1!=x2}
        x^2y + xy^2 - x - y = (xy - 1)(x + y) \ne 0
    \end{equation}
    so $x_1 \ne x_2$. We compute
    \[\lambda = \frac{y_2 - y_1}{x_2 - x_1} = \frac{-x}{x+y}\]
    so the $x$-coordinate of the resulting sum (using the fact that $g(x, y) = 0$) is
    \begin{align*}
        x_3 &= \lambda^2 + \lambda - 1 - x_1 - x_2\\
        &= -\frac{x^3y^2 + x^2y^3 + x^4 + 4x^3y + 7x^2y^2 + 4xy^3 + y^4 + x^2y + xy^2}{(x+y)^2(x+y+1)(xy + x + y)}\\
        &= -\frac{xy(x^2y + xy^2 + xy + x + y) + (x+y)^4}{(x+y)^2(x^2y + xy^2 + x^2 + 3xy + y^2 + x + y)}\\
        &= -\frac{(x + y)^2}{(x^2y + xy^2 + xy + x + y) + (x + y)^2}\\
        &= -\frac{(x+y)^2}{(x+y)^2} = -1.
    \end{align*}
    We also have 
    \[\nu = \frac{y_2x_1 - x_2y_1}{x_2 - x_1} = \frac{-x}{x+y} = \lambda\]
    so the $y$-coordinate of the result is
    \[-(\lambda + 1)x_3 - \nu - 1 = 0.\]
    \medskip
    If the assumption in \eqref{eqn:assume_so_x1!=x2} does not hold, then we can eliminate the possibility $x + y = 0$ since then $g(x, y) = 0$ implies $x - x + 1 + 1/x + 1/(-x) = 1 = 0$, so we must have $xy = 1$ and $g(x, y) = 0$ implies $2x^2 + x + 2 = 0$. This turns the problem into the doubling of 
    \[(x_1, y_1) = \left(\frac{-x}{x^2 + x + 1}, \frac{-x^2}{x^2 + x + 1}\right).\]
    In this case, 
    \[\lambda = \frac{3x_1^2 + 2x_1 - y_1}{2y_1 + x_1 + 1} = \frac{-x(x^2 + 2)}{(x+1)(x^2+x+1)}\]
    and the $x$-coordinate of the result is
    \[x_3 = \frac{(-1)(x^6 + 4x^5 + 2x^4 + 7x^3 + 2x^2 + 4x + 1)}{(x + 1)^2(x^2 + x + 1)^2} = \frac{(-1)[(x + 1)^2(x^2 + x + 1)^2 - (6x^4 + 3x^3 + 6x^2)]}{(x + 1)^2(x^2 + x + 1)^2} = -1.\]
    Now, some computation shows that
    \[\nu = \frac{-x_1^3 - y_1}{2y_1 + x_1 + 1} = \frac{-x^2(x^3 + x^2 + 2x + 1)}{(x-1)(x^2 + x + 1)^2}\]
    and factoring $\nu - \lambda$ gives a factor of $2x^2 + x + 2$, hence $\nu = \lambda$ and $y_3 = 0$ as before.\medskip 
    
    In either case, we get the point $(-1, 0)$ on $E_4(1)$. The corresponding point on the projectivization is $(-1, 0, 1)$, which $\psi$ takes to $(0, 0, 1)$.
\end{proof}

From (d) we have, for example, $(x, y, 1) + (0, 0, 1) = -(1/y, 1/x, 1) = (1/x, 1/y, 1)$. Similar calculations using \cref{prop:sums_lead_to_triv_gp} lead to the following corollary.

\begin{cor}
\label{cor:adding_triv_gp}
Let $g(x, y) = 0$. Then, the applications of the nontrivial symmetries in $D_8$ to the quiet pattern corresponding to $(x, y, 1)$ are given as follows: 
\begin{enumerate}[(a)]
    \item Rotation by $90^\circ$ ccw: $(1/y, x, 1) = (x, y, 1) + (0, 1, 0)$
    \item Rotation by $180^\circ$: $(1/x, 1/y, 1) = (x, y, 1) + (0, 0, 1)$
    \item Rotation by $90^\circ$ cw: $(y, 1/x, 1) = (x, y, 1) + (1, 0, 0)$
    \item Horizontal flip: $(1/x, y, 1) = (1, 0, 0) - (x, y, 1)$
    \item Vertical flip: $(x, 1/y, 1) = (0, 1, 0) - (x, y, 1)$
    \item Flip about wrong diagonal (/): $(y, x, 1) = -(x, y, 1)$
    \item Flip about main diagonal ($\backslash$): $(1/y, 1/x, 1) = (0, 0, 1) - (x, y, 1)$.
\end{enumerate}
\end{cor}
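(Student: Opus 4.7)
The strategy is to derive each identity in \cref{cor:adding_triv_gp} from \cref{prop:sums_lead_to_triv_gp} by straightforward algebra in the abelian group $\tilde E(\F_p)$, together with two auxiliary facts about the trivial subgroup $T = \{(-1, 1, 0), (1, 0, 0), (0, 0, 1), (0, 1, 0)\}$: namely, that $T$ is cyclic of order 4 with $(1, 0, 0) + (0, 1, 0) = (-1, 1, 0)$ (the identity) and $(1, 0, 0) + (1, 0, 0) = (0, 0, 1)$. I would establish these relations up front by a short computation on the Tate normal form $E_4(1)$: doubling the point $(0, -1)$ (which corresponds to $(1, 0, 0)$ under $\phi$) gives $(-1, 0)$, which corresponds to $(0, 0, 1)$, and tripling gives $(0, 0)$, which corresponds to $(0, 1, 0)$.

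The most direct items are (f), which is a verbatim restatement of part (a) of \cref{prop:sums_lead_to_triv_gp}, and (e), (g), which come straight from rearranging parts (c) and (d) of the proposition. For (b), the $180^\circ$ rotation, I would apply item (a) of the corollary twice, so that $(1/x, 1/y, 1) = (x, y, 1) + 2(0, 1, 0)$, and then use $2(0, 1, 0) = (0, 0, 1)$ from the first paragraph.

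For items (a), (c), and (d), my plan is to apply parts (b), (c), and (b) of \cref{prop:sums_lead_to_triv_gp} respectively, but with $x$ and $y$ swapped, and then use part (a) of the proposition to replace $(y, x, 1)$ by $-(x, y, 1)$. For example, applying (b) of the proposition to the point $(y, x, 1)$ gives $(y, x, 1) + (1/y, x, 1) = (0, 1, 0)$; substituting $(y, x, 1) = -(x, y, 1)$ and rearranging yields $(1/y, x, 1) = (x, y, 1) + (0, 1, 0)$, which is (a) of the corollary. Items (c) and (d) follow by the same two-step recipe, reducing any resulting expression using the identifications $-(1, 0, 0) = (0, 1, 0)$ and $-(0, 0, 1) = (0, 0, 1)$ established in the first paragraph.

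No step presents any real obstacle: once the cyclic structure of $T$ is pinned down and \cref{prop:sums_lead_to_triv_gp} is available, every case reduces to a one- or two-line algebraic manipulation in $\tilde E(\F_p)$. The only care needed is consistent bookkeeping of signs and of which generator of $T$ is being used.
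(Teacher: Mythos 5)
Your overall strategy is exactly the paper's: the published proof consists of one worked example plus the remark that the remaining items follow by ``similar calculations'' with \cref{prop:sums_lead_to_triv_gp}, and your handling of (a), (b), (c), (f) and (g) is correct. The auxiliary structure of the trivial subgroup that you compute on $E_4(1)$ (cyclic of order $4$, with $2(1,0,0)=(0,0,1)$ and $(0,1,0)=-(1,0,0)$) checks out, and is in any case already forced by the orders listed in \cref{tab:triv_gp_a=1}.

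Your treatment of (d) and (e), however, does not prove the statements as printed. Rearranging part (c) of \cref{prop:sums_lead_to_triv_gp} gives $(x,1/y,1)=(1,0,0)-(x,y,1)$, not the asserted $(x,1/y,1)=(0,1,0)-(x,y,1)$ of item (e); and your recipe for (d) --- apply part (b) with $x$ and $y$ swapped, then use part (a) --- only reproduces item (a), since the swapped identity concerns $(1/y,x,1)$ rather than the horizontal flip $(1/x,y,1)$. Applying part (b) without any swap gives $(1/x,y,1)=(0,1,0)-(x,y,1)$, which again is not the printed right-hand side of (d). In fact no bookkeeping can close this: the printed (d) and (e) are equivalent to $(x,y,1)+(1/x,y,1)=(1,0,0)$ and $(x,y,1)+(x,1/y,1)=(0,1,0)$, which directly contradict parts (b) and (c) of the proposition (a check at $p=7$ with $(x,y)=(3,6)$, $1/x=5$, confirms the proposition: the sum $(3,6,1)+(5,6,1)$ lands on $(0,0)$ in $E_4(1)$, i.e.\ on $(0,1,0)$, not on $(1,0,0)$). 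So either items (d) and (e) of the corollary have $(1,0,0)$ and $(0,1,0)$ interchanged --- a discrepancy between the proposition and the corollary that you should flag and resolve explicitly --- or your derivation fails; as written, your argument silently proves identities different from the ones stated, and the two-step recipe quoted for (d) is mis-aimed even for the corrected statement.
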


For any elliptic curve with a subgroup $H$ of $n$ points, we can consider the orbit of a point $P$ under $D_{2n}$ given by $(H + P) \cup (H - P)$; in our case, when $H$ is the subgroup $\Z/4\Z$ shown in \cref{tab:triv_gp_a=1}, this orbit can actually be visualized as the literal action of $D_8$ by its natural symmetries on infinite quiet patterns.

\section{Further Properties}
\label{sect:further}
\begin{prop}
\label{prop:perfect}
Let $p$ be prime. The evolution $\Evo_p(Q)$ of a pattern that has no $\vec 0$ column or row produces a pattern that also has no $\vec 0$ column or row, unless $p = 2$ and two adjacent columns or rows of $Q$ are the same. Conversely, if $Q$ has a $\vec 0$ column or row, so does $\Evo(Q)$.
\label{prop:PerfectionOfEvo}
\end{prop}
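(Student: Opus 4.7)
The plan is to pass to the polynomial picture and analyse each row of $\Evo_p(Q)$ separately; the statement for columns will follow by the $x \leftrightarrow y$ symmetry of $g$. Writing the polynomial of $Q$ as $f(x,y) = \sum_i f_i(x)\, y^i$ (so $f_i$ encodes row $i$ of $Q$), \cref{defn:EvoQ} gives $\Evo_p(Q) \leftrightarrow f(x^p,y^p)\, g(x,y)^{p-1}$. Expanding $g(x,y)^{p-1} = \sum_{b=-(p-1)}^{p-1} G_b(x)\, y^b$ by horizontal slices and reading off the coefficient of $y^{pi+k}$, with $0 \leq k \leq p-1$, row $pi+k$ of $\Evo_p(Q)$ corresponds to
\[R_{pi+k}(x) \;=\; f_i(x^p)\, G_k(x) \;+\; f_{i+1}(x^p)\, G_{p-k}(x),\]
using $G_{-b}=G_b$ and the convention that $f_i = 0$ when $i$ is out of range.

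Before the case analysis I would record the structure of $G_b(x)$: it is supported in monomials $x^a$ with $|a| \leq p-1-|b|$, and reading off a single multinomial term of $g^{p-1}$ gives $[x^{\pm(p-1-|b|)}]\,G_b(x) = \binom{p-1}{|b|} \equiv (-1)^{|b|} \pmod p$; in particular every $G_b$ is nonzero. The forward direction of the proposition and the case of the \emph{center rows} $r=pi$ both fall out at once, since $R_{pi}(x) = f_i(x^p)\, G_0(x)$ vanishes iff $f_i = 0$ iff row $i$ of $Q$ is zero.

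The heart of the argument is the \emph{between rows} $r = pi + k$ with $1 \leq k \leq p-1$. For $1 \leq i \leq m-1$ I would compare the leftmost nonzero coefficients of the two summands. Setting $j^{(i)} := \min\{j : Q_{i,j} \neq 0\}$, these occur at $c_1 = p\, j^{(i)} - (p-1-k)$ and $c_2 = p\, j^{(i+1)} - (k-1)$, with values $(-1)^k\, Q_{i,j^{(i)}}$ and $(-1)^{p-k}\, Q_{i+1,j^{(i+1)}}$ respectively. The difference satisfies $c_2 - c_1 \equiv -2k \pmod{p}$, which is nonzero mod $p$ for every $k \in \{1,\dots,p-1\}$ precisely when $p$ is odd. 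Hence for odd $p$ the two leftmost positions are distinct, and at $\min(c_1,c_2)$ exactly one summand contributes, giving a coefficient $\pm Q_{\cdot,\,j^{(\cdot)}} \not\equiv 0 \pmod p$; so $R_{pi+k} \neq 0$. The boundary rows ($i = 0$ or $i = m$) are easier: only one summand survives and each factor has already been shown to be nonzero.

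The exceptional case is $p=2$, where $k=1$ and $G_1(x) = 1$, so $R_{2i+1}(x) = f_i(x^2) + f_{i+1}(x^2)$; this vanishes iff $f_i = f_{i+1}$, i.e.\ iff rows $i$ and $i+1$ of $Q$ coincide, which matches the single exception in the statement. The main obstacle, as I see it, is recognising the ``leftmost-corner'' invariant and carrying out the short mod-$p$ calculation $c_2 - c_1 \equiv -2k$, which is exactly what cleanly isolates characteristic $2$ from all odd primes; once that is in place, the rest of the argument is a bookkeeping exercise.
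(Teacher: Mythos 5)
Your proposal is correct, and it takes a genuinely different route from the paper's proof. The paper argues geometrically from the superposition picture of \cref{rmk:visualizingEvo}: for each column of $\Evo_p(Q)$ it locates the topmost nonzero click coming from the topmost nonzero entry of a column of $Q$ and argues that the relevant part of that primitive diamond cannot be reached (hence cannot be cancelled) by any other overlapping diamond; for odd $p$ this handles the middle column $pi$ and the $(p-1)/2$ columns on either side at once, with the first and last few columns treated via the first and last columns of $Q$, and for $p=2$ the overlap of adjacent ``plusses'' gives the cancellation criterion. You instead work algebraically: slicing $g^{p-1}=\sum_b G_b(x)y^b$ and writing row $pi+k$ of $\Evo_p(Q)$ as $f_i(x^p)G_k(x)+f_{i+1}(x^p)G_{p-k}(x)$, then comparing the minimal-degree terms of the two summands, whose positions differ by $-2k \not\equiv 0 \pmod p$ for odd $p$; since the extreme coefficients of $G_b$ are $\binom{p-1}{|b|}\equiv(\pm 1)$ and $\F_p[x,x^{-1}]$ is a domain, no cancellation can occur at the extreme position, while for $p=2$ one gets $G_1=1$ and hence the exact criterion $f_i=f_{i+1}$. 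Your version buys a uniform, fully explicit verification (the congruence $c_2-c_1\equiv -2k$ is precisely what isolates characteristic $2$, and the converse and the center rows $R_{pi}=f_i(x^p)G_0(x)$ come for free), at the cost of the slice bookkeeping; the paper's version stays close to the visual diamond picture it has already set up, but leaves the non-cancellation claims to the geometry rather than to a computation. Both arguments hinge on the same underlying idea of an extremal entry that nothing else can cancel, and your boundary-row and row/column-symmetry remarks correctly cover the remaining cases.
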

\begin{proof}
We will prove that $\Evo_p(Q)$ has no $\vec 0$ column given that the same is true for $Q$ and either $p > 2$ or there is an odd number of columns; of course, the same argument applies equally well for rows. Label the columns of $Q$ from 1 to $N$ and those of $\Evo(Q)$ from $1$ to $pN+p-1$. Any column with index $pi$ in $Q$ contains the centers of evolutions of single squares, so we can just take the top nonzero click in the corresponding column in $Q$ (namely the one with index $i$)s and note that the top half of the middle column of the evolution of that square cannot be cancelled out by anything in $\Evo(Q)$; in particular, the top square of that middle column has the same nonzero number of clicks as the corresponding square in $Q$. 

If $p$ is odd, then exactly half of the $p-1$ columns on either side of that middle column cannot be cancelled out by anything in $\Evo(Q)$, so the same top nonzero entry in Column $i$ of $Q$ also guarantees that Columns $pi \pm j$ in $\Evo(Q)$ are not $\vec 0$, for $j = 1, 2, \dots, (p-1)/2$. This argument shows that none of the columns can be $\vec 0$, except possibly the first or last $(p-1)/2$, and those are easily taken care of by considering the top nonzero entries in the first and last column of $Q$ (for example, the one in the column 1 of $Q$ has an evolution such that the top half of the first $p$ columns cannot be cancelled by anything in $\Evo(Q)$).

If $p = 2$ then similarly, the first 2 and last 2 columns of $\Evo(Q)$ cannot be $\vec 0$, and (as already proven above) neither can any column with even index. A column of index $2i + 1$ would be $\vec 0$ in $\Evo(Q)$ if and only if all the middle 1's from the ``plusses'' centered at columns $2i$ and $2i+2$ all cancel, which happens precisely when columns $i$ and $i+1$ are the same in $Q$.

Conversely, it is clear that if the $i$-th column of $Q$ is $\vec 0$, so is the $pi$-th column of $\Evo(Q)$.
\end{proof}
\rmk If $p=2$ and $Q$ is a quiet pattern, then $Q$ cannot have two identical columns unless the number of columns in $Q$ is even. If there are two identical columns in $Q$, we can starting from the two equal columns and proceed to ``chase the lights'' to the left and right from those; the same columns follow on both sides, and if $Q$ is a quiet pattern, this chasing stops when we reach a column where don't have to click at all, and such a column must be reached at the same time on the left and right sides, making the number of columns even.

A simple corollary of \cref{prop:GenMikado} is the following. Let $\bar G$ be an infinite grid of squares that is bounded only at the top. Suppose we wanted to make a quiet pattern in $\bar G$ where exactly one square is clicked in the top row, say $k_0$ times. Then there is a unique way to ``chase the lights'' down from the top row, so such a quiet pattern would be unique up to horizontal translation; we will call the pattern $\bar Q(p, k_0)$ when the number of possible states is $p$.
\begin{prop}
Let $p$ be prime. The top $p^n$ rows of $\bar Q(p, k_0)$ are those in $\Evo_p^{(n)}(k_0)$.
\label{prop:infProcDown}
\end{prop}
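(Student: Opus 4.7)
The plan is to reduce the statement to a uniqueness argument for the ``chase the lights'' recursion, leveraging the fact that clicking $\Evo_p^{(n)}(k_0)$ in an infinite grid produces a state supported at just five squares.

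First I would check that the top row of $\Evo_p^{(n)}(k_0)$ contains a single $k_0$ in the middle column. Since this pattern corresponds to the Laurent polynomial $k_0\,g(x,y)^{p^n-1}$, the only way to obtain the power $y^{p^n-1}$ is to pick the $y$ term from every factor of $g$, which produces coefficient $k_0$ at $x^0 y^{p^n-1}$ and $0$ elsewhere in that row. This matches the top row of $\bar Q(p,k_0)$ by definition.

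Next I would use the iterated Frobenius identity $k_0\,g(x,y)^{p^n} \equiv k_0\bigl(x^{p^n}+y^{p^n}+1+x^{-p^n}+y^{-p^n}\bigr)\pmod p$, observed in the remark following \cref{thm:Q<=>Evo(Q)}: clicking $\Evo_p^{(n)}(k_0)$ in an infinite grid yields nonzero state only at the center and the four tips $(\pm p^n,0),(0,\pm p^n)$. The crucial consequence is that every square strictly above the center row of the pattern (and inside the $(2p^n-1)\times(2p^n-1)$ box) has state $0$. A quick boundary check then shows that when we restrict to the top $p^n$ rows and regard them as clicks in $\bar G$, the state at any row $j$ with $1 \le j \le p^n-1$ depends only on clicks in rows $j-1,j,j+1$, all of which lie within the top $p^n$ rows (with the nonexistent row above the top contributing $0$ in either interpretation). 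Hence these states agree with the corresponding states from the full pattern click, and are therefore all $0$.

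Finally I would close with chase-lights uniqueness. Rearranging the quietness relation at row $j$ gives $c(i,j+1) = -c(i,j) - c(i-1,j) - c(i+1,j) - c(i,j-1)$ (with $c(i,0)=0$), so the entire semi-infinite pattern is determined by its top row. Since $\bar Q(p,k_0)$ and the top $p^n$ rows of $\Evo_p^{(n)}(k_0)$ share the same top row and both satisfy these quietness conditions at rows $1,\dots,p^n-1$, row-by-row induction forces agreement through row $p^n$. The main subtle point is the boundary check: one must be careful that the ``ghost'' row above the top row of the pattern is interpreted as zero in both the full-grid and the semi-infinite settings; once this is in place, the rest is routine.
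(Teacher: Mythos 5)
Your proposal is correct and takes essentially the approach the paper intends: the paper offers no separate proof, presenting the proposition as an immediate corollary of \cref{prop:GenMikado} (clicking $\Evo_p^{(n)}(k_0)$ lights only the center and the four tips at distance $p^n$) combined with the uniqueness of chasing the lights down from a single $k_0$ in the top row. Your write-up simply makes explicit the same two ingredients, including the boundary check that the nonexistent row above the top contributes zero in both interpretations.
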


Christopher Ratigan had made the following conjecture as an undegraduate. Consider the ``connected components'' of squares that are clicked in a quiet pattern with 2 colors, i.e. connected components of the subgraph of the grid graph (where two vertices are adjacent if they share an edge) consisting of those squares that are actually clicked. In \cref{fig:qp2x3} we see that the left two quiet patterns are connected, and we will call their shape a ``Tetris T''; the right quiet pattern in the same figure shows two vertical ``dominoes'' of 1's side-by-side, separated by an empty column.

\begin{prop}[Ratigan]
\label{prop:Ratigan}
The only connected components of $\bar Q(2, 1)$ are Tetris ``T''s and dominoes.
\end{prop}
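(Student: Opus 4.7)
The plan is to combine a parity classification of grid cells with the mod-$2$ quiet-pattern condition to pin down the components. Classify each cell $(i,j)$ by the parities of its two coordinates: type $A$ ($i$ odd, $j$ even), type $B$ ($i$ even, $j$ even), type $C$ ($i$ even, $j$ odd), and type $D$ ($i$ odd, $j$ odd). The key preliminary claim, and what I expect to be the main obstacle, is that in $\bar Q(2,1)$ every type-$D$ cell is $0 \pmod 2$; equivalently, every odd-indexed row of the pattern is supported only at even column indices.

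To prove this claim, let $R_i(x) \in \F_2[x, x^{-1}]$ be the Laurent polynomial whose coefficients encode row $i$ of $\bar Q(2,1)$. Chasing lights downward from the single top-row click gives the second-order linear recurrence
\[R_{i+1}(x) = R_{i-1}(x) + (x + 1 + x^{-1}) R_i(x), \qquad R_0 = 0, \quad R_1 = 1.\]
A short induction on $n$ yields the Fibonacci-style identity $R_{m+n} = R_m R_{n+1} + R_{m-1} R_n$, and setting $m = n + 1$ gives $R_{2n+1} = R_{n+1}^2 + R_n^2$. Since we are in characteristic $2$, Frobenius rewrites this as $R_{2n+1}(x) = R_{n+1}(x^2) + R_n(x^2)$, manifestly a polynomial in $x^2$. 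Thus every odd row has support at even columns only, so all type-$D$ cells vanish.

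With type $D$ eliminated, the adjacency graph of the $1$'s is very restricted: each $A$- or $C$-cell has at most two potential $1$-neighbors, both of type $B$, while each $B$-cell has up to four potential $1$-neighbors, of types $A$ (above/below) and $C$ (left/right). The mod-$2$ quiet condition says that for every cell the sum of its value and its (up to four) neighbors' values is $0 \pmod 2$; applied at a $1$-cell this forces the number of $1$-neighbors to be odd, and the same conclusion survives at the top-row boundary, where only the count of available neighbors is reduced. Hence every $A$- or $C$-cell has degree exactly $1$ (its maximum being $2$), and every $B$-cell has degree $1$ or $3$. Finally, two $B$-cells are never grid-adjacent (both coordinates would differ by $2$) and the $A$'s and $C$'s are all leaves, so each connected component of $1$'s contains exactly one $B$-cell: if this $B$ has degree $1$ the component is a domino, and if it has degree $3$ the component is a central $B$ together with three of its four grid-neighbors, which is precisely a Tetris $T$ in one of its four orientations. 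This exhausts the possibilities and proves Proposition \ref{prop:Ratigan}.
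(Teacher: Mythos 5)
Your proof is correct, and it takes a genuinely different route from the paper's. The paper deduces the result from the evolution machinery: by \cref{prop:infProcDown} the top $2^n$ rows of $\bar Q(2,1)$ coincide with those of $\Evo_2^{(n)}(1)$, and one checks that evolving a pattern whose components are dominoes and T's again yields only dominoes and T's (the sole exception being the central component, which alternates between a single square and a `+' and stays below the top $2^n-2$ rows), so the claim follows by letting $n\to\infty$. You instead work directly on the half-infinite grid: the transfer recurrence $R_{i+1}=R_{i-1}+(x+1+x^{-1})R_i$ with $R_0=0$, $R_1=1$, the Fibonacci-type identity $R_{2n+1}=R_{n+1}^2+R_n^2$ (which does follow from the standard induction for $R_{m+n}=R_mR_{n+1}+R_{m-1}R_n$), and the Frobenius map give the key structural fact that odd rows are supported on even columns; then the mod-$2$ parity of the neighbor count at each lit cell forces every lit $A$- or $C$-cell to be a leaf and every lit $B$-cell to have degree $1$ or $3$, and since no two cells of types in $\{A,B,C\}$ other than an $A$--$B$ or $C$--$B$ pair can ever be grid-adjacent, each component is exactly one $B$-cell together with its lit neighbors, i.e.\ a domino or a Tetris T. I checked the adjacency case analysis and the top-row boundary case (where the lone clicked cell still ends up with exactly one lit neighbor); both are sound. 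Your approach is more self-contained---it uses nothing about evolutions---and yields a sharper structural statement: every component contains exactly one cell with both coordinates even and consists of that cell plus its lit neighbors. The paper's proof is shorter given the machinery already built and makes visible the fractal origin of the T-and-domino decomposition, which the parity argument conceals.
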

\begin{proof}
First, note that the evolution of a pattern with two separate connected components always has the parts corresponding to the two components separated (if two components are diagonally adjacent, there will be a diagonal of 0's between their evolutions, and if they are separated by a line then their evolutions will be too). The evolution of either a domino or a ``T'' is composed of connected components that are again just dominoes and T's, as can be seen in \cref{fig:qp5x7}. The evolution of a single square with a 1 in it, i.e. $\Evo_2(1)$, is a `+' shape (see \cref{eg:Evo_2(1)}), and the evolution of a `+' shape is the following:
\[\Evo_2^{(2)}(1) = \begin{array}{ccccccc}
0 & 0 & 0 & 1 & 0 & 0 & 0 \\
0 & 0 & 1 & 1 & 1 & 0 & 0 \\
0 & 1 & 0 & 0 & 0 & 1 & 0 \\
1 & 1 & 0 & 1 & 0 & 1 & 1 \\
0 & 1 & 0 & 0 & 0 & 1 & 0 \\
0 & 0 & 1 & 1 & 1 & 0 & 0 \\
0 & 0 & 0 & 1 & 0 & 0 & 0 \\
\end{array}\]
so the only connected components of the patterns $\Evo_2^{(n)}(1)$ are dominoes and T's, with the exception of the center, which alternates between a single square and a `+' shape, and in particular the top $2^n - 2$ rows consist of nothing but dominoes and T's.
\end{proof}

We now move on to understanding how many squares can have a nonzero state if we start clicking from the all-zero state, and only click a finite number of times, in the grid of squares $G$ from \cref{sect:EllCurves} that stretches infinitely far in all directions. If we click in the alternating pattern $1, -1, 1, \dots$ consecutively $r$ times along a diagonal, we make exactly $r + 4$ squares have a nonzero state, so it is possible to have exactly $n$ squares with a nonzero state for any $n \ge 5$.

Assuming all squares start in a state of 0 as always, we will say the light in a square is ``on'' or ``lit'' if the square has a nonzero state as a result of some clicks.

\begin{thm}
    \label{thm:5Lights}
        Let $p$ be prime. Suppose we have a grid $G$ stretching infinitely far in all four directions. After a finite number of clicks anywhere in $G$, the minimum number of lit squares will always be at least five, and the only way to have exactly five squares lit is to click in an $\Evo_p^{(n)}(k_0)$ pattern.
        %\item For a composite number of colors $k$, we can pick any prime $p \mid k$ and click in a $(kk_0/p)\Evo_p^{(n)}(1)$ pattern for any $n$ and any $k_0 = 1, 2, \dots, p-1$ to leave only five lights on, and this gives the only ways to leave only five squares lit.
        %\item When working over $\Z$, we can leave five lights on only if we click one square.
\end{thm}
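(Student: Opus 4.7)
The plan is to reformulate the theorem as a divisibility question for Laurent polynomials over $\F_p$, then combine Newton-polygon combinatorics with a Frobenius/irreducibility argument. Identify a finite click pattern with a nonzero $f(x, y) \in \F_p[x^{\pm 1}, y^{\pm 1}]$; by the polynomial-multiplication viewpoint of \cref{sect:EvoDef}, the resulting state is $s(x, y) = f(x, y) g(x, y)$, and the number of lit squares equals $|\operatorname{supp}(s)|$. Recall from the Remark following \cref{thm:Q<=>Evo(Q)} that the pattern $\Evo_p^{(n)}(k_0)$ corresponds, up to a monomial translation, to $f = k_0 g^{p^n - 1}$, so $s = k_0 g^{p^n} = k_0 (x^{p^n} + y^{p^n} + 1 + x^{-p^n} + y^{-p^n})$ has exactly $5$ terms. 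The goal is to prove both (i) $|\operatorname{supp}(s)| \geq 5$ whenever $f \neq 0$, and (ii) equality forces $f$ to have this form.

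For (i), let $P = \operatorname{Newt}(f)$ and $D = \operatorname{Newt}(g)$, the unit diamond with vertices $(\pm 1, 0), (0, \pm 1)$ and interior lattice point $(0, 0)$. Since $\F_p[x^{\pm 1}, y^{\pm 1}]$ is an integral domain, $\operatorname{Newt}(s) = P + D$ (Minkowski sum), and every vertex of $P + D$ is produced as a single product of extremal vertices, hence has nonzero coefficient in $s$. The number of vertices of $P + D$ is at least $4$, so $|\operatorname{supp}(s)| \geq 4$. When $f$ is a single monomial, $s$ is a shifted copy of $g$ with exactly $5$ terms (the $n = 0$ case). When $f$ has more than one monomial, the interior point $(0, 0) \in D$ places every lattice point of $P$ in the interior of $P + D$; picking a sharp vertex $v$ of $P$ (say, the extremal one maximizing $x + y$), the tangent cone at $v$ forces most of the neighbors $v \pm e_1, v \pm e_2$ to lie outside $P$, so the sum $[v] s = \sum_{d \in \operatorname{supp}(g)} f_{v - d}$ reduces to a short expression whose dominant term is $f_v \neq 0$; a careful choice of $v$ yields $[v] s \neq 0$, providing the fifth nonzero term.

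For (ii), suppose $|\operatorname{supp}(s)| = 5$. Then $P + D$ has exactly $4$ vertices, which forces every edge of $P$ to be parallel to one of the four $45^\circ$ edges of $D$; hence $P$ is a rotated rectangle (possibly a segment or a point). Translate $s$ by a monomial so that its fifth, non-vertex term sits at the origin. A coefficient-by-coefficient analysis at the boundary and interior lattice points of $P + D$, using that exactly five monomials have nonzero coefficients, forces $P + D$ to be a \emph{square} diamond with vertices $(\pm r, 0), (0, \pm r)$ for some $r \geq 1$, and forces all five coefficients of $s$ to equal some $k_0 \in \F_p^*$; equivalently, $s = k_0 \cdot g(x^r, y^r)$. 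Since $g$ is irreducible in $\F_p[x^{\pm 1}, y^{\pm 1}]$ (one checks that $xy \cdot g = x^2 y + xy^2 + xy + x + y$ is irreducible as an ordinary polynomial, by showing it has no factorization linear-in-$y$ over $\F_p[x]$), the divisibility $g \mid s$ reduces to $g \mid g(x^r, y^r)$. Because $g(x^{p^n}, y^{p^n}) = g^{p^n}$ over $\F_p$, the values $r = p^n$ all work; conversely, if $r$ is not a power of $p$, then a point $(x_0, y_0) \in E(\overline{\F_p})$ on the affine elliptic curve $E \colon g = 0$ can be exhibited with $(x_0^r, y_0^r) \notin E$, so $g(x_0^r, y_0^r) \neq 0$ and $g \nmid g(x^r, y^r)$. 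Hence $r = p^n$, and $f = k_0 g^{p^n - 1}$ up to a monomial, which is exactly the $\Evo_p^{(n)}(k_0)$ click pattern.

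The main obstacle is the combinatorial portion of (ii): reducing a general $45^\circ$-rotated rectangle shape for $P + D$ to a \emph{square} diamond with all five coefficients equal. This requires tracking coefficient sums at many lattice points of $P + D$ simultaneously, and I expect the full $D_8$-symmetry of $g$ noted in the Remark after \cref{defn:Evo} to be an essential tool for cutting down the cases. In comparison, the Frobenius/irreducibility step that rules out non-$p$-power values of $r$ is relatively clean.
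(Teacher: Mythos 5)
Your Newton-polygon framing is a genuinely different route from the paper, which instead bounds everything by ``chasing lights'' along the diagonals of the bounding diagonal rectangle, with \cref{lem:chasing_limit} controlling how far chasing can succeed in terms of the $p$-adic valuation of a side length. The endgame you propose --- reducing to $g \mid g(x^r, y^r)$ and using $g(x^{p^n}, y^{p^n}) = g^{p^n}$ plus irreducibility of $g$ to force $r = p^n$ --- is an attractive alternative to the paper's uniqueness-of-chasing argument. But there are genuine gaps at exactly the places where the real work lives. In (i), the vertices of $P+D$ give only four nonzero coefficients, and your candidate fifth coefficient $[v]s = f_v + f_{v-e_1} + f_{v-e_2}$ at an extremal vertex $v$ of $P$ can perfectly well vanish mod $p$ (e.g.\ $f = xy + x + y$ with $p = 3$ kills the vertex maximizing $x+y$); you have not shown that \emph{some} vertex of $P$, or any other fifth lattice point, must survive, and that is the entire content of ``at least $5$.'' In (ii), you assert that $|\mathrm{supp}(s)| = 5$ forces $P + D$ to have exactly four vertices, but it could be a pentagon with all five support points at its vertices --- this is precisely the configuration the paper must treat separately as ``two nonzero clicks on one edge of the rectangle,'' and it needs its own argument. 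Most importantly, the ``coefficient-by-coefficient analysis'' that is supposed to force $s = k_0\, g(x^r, y^r)$ with all five coefficients equal is the combinatorial core of the theorem (it is what the paper's analysis of the $\Evo_p^{(n)}$ diamonds and \cref{lem:chasing_limit} accomplish), and you explicitly leave it as ``the main obstacle'' rather than carrying it out. As it stands the proposal is an outline with its hardest step missing.

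Two smaller issues in the final step: to pass from $V(g) \subseteq V(g(x^r,y^r))$ to $g \mid g(x^r,y^r)$ you need $g$ irreducible over $\overline{\F_p}$, not merely over $\F_p$; this is automatic for $p \ge 7$ where $\tilde E$ is a genuine elliptic curve, but for $p = 2, 3, 5$ the cubic is singular and geometric irreducibility requires a separate (easy) check. And the claim that for $r$ not a power of $p$ one can ``exhibit'' a point of $E(\overline{\F_p})$ with $(x_0^r, y_0^r) \notin E$ is plausible but unproven; it deserves an actual argument, for instance via degrees or Newton polygons of the putative cofactor $g(x^r,y^r)/g$. These are fixable, unlike the central gap above.
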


The proof starts the same as in \citep{Mikado}. Note that if our clicks are restricted to a line (horizontal, vertical, or diagonal) and there is more than one click, then (considering the endpoints of the line) the number of lit squares will be at least 6, so we may assume the clicks are not restricted to a line. 

If we consider the smallest rectangle that contains all the clicks, after the clicks there must be a nonzero state on each side of the rectangle, just outside of the rectangle, so there cannot be any less than 4 squares with nonzero states. 
A pattern that leaves only 4 squares with nonzero states must have a nonzero number of clicks in only one square in the top and bottom rows and the leftmost and rightmost columns of this bounding rectangle. The pattern must then proceed at a $45^\circ$ angle from each of these squares, alternating signs as it goes. %($c$ clicks, then $-c$, then $c$, then $-c$, and so on). Let $k = p$ be prime first. 
We assume without loss of generality that the leftmost number of clicks is $c = 1$ (else divide the whole pattern by $c$). There is only one way to turn off the leftmost square clicked, then the square above and right of that, and then the square above-right of that and so on, so we find that there is a unique way that the diagonal just below must be clicked (the diagonal consisting of $-1, 2, -3, 4, \dots$ in  \cref{fig:minimal_pattern}). Then, we can continue chasing lights in this way and we find there is a unique way to click the diagonal below, each number of clicks being forced by the fact that the square to the left must be turned off, so we get the diagonal $-1, 2, -4, 7, -11, \dots$ then $2, -5, 10, -18, 30, \dots$, next $1, -4, 11, -24, 46, \dots$, and so on. These form a triangle which is the OEIS sequence A364366 with alternating signs.

\begin{figure}[hbtp]
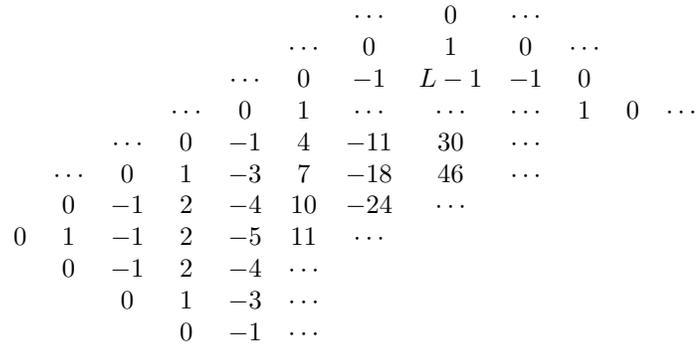

    \centering
    $\begin{array}{ccccccccccccc}
&&&&&&\cdots&0&\cdots\\
&&&&&\cdots&0&1&0&\cdots\\
&&&&\cdots&0& -1 & L-1 & -1&0\\
&&&\cdots&0&1&\cdots & \cdots & \cdots & 1 & 0 & \cdots\\
&&\cdots&0&-1&4&-11&30 &\cdots\\
&\cdots&0&1&-3&7&-18&46&\cdots\\
&0&-1&2&-4&10&-24&\cdots\\
0&1&-1&2&-5&11&\cdots\\
&0&-1&2&-4&\cdots\\
&&0&1&-3&\cdots\\
&&&0&-1&\cdots
\end{array}$
    \caption{Diagonals formed by chasing from bottom-left to the upper-right.}
    \label{fig:minimal_pattern}
\end{figure}

Suppose the length of the diagonal is $L$. Since the state of the topmost square must end up 0, we must have $(L-1) + 1 \equiv 0 \pmod p$, and similarly each of the diagonals in the rectangular array of clicks must have length divisible by $p$. Now, for primes $p$ for which $c_p \ne 0$, the proof in \citep{Mikado} generalizes in a fairly straightforward way, as we encourage the reader to verify, though there is a point not discussed in \citep{Mikado} which we will work out in the very end of the proof. And though such primes have density 0 (see \cref{thm:c_p_ell_curve}), we still want to prove this for all primes $p$.

We have shown that the length of the outermost diagonal must be divisible by $p$ or else the topmost square clicked will remain lit. Now we will generalize this statement by examining how long we can continue chasing along a length-$L$ diagonal and keep making the squares in the diagonal above turn to a state of 0. First note that we have ``longer'' diagonals of clicks, say of length $n$ as above, intermixed with ``shorter'' diagonals in between them (of length $n-1$). As we chase the lights by clicking in a ``shorter'' diagonal, if the last square clicked in that diagonal does not turn the state of the square above it to 0, then the process will fail. In other words, if we chase the lights to the right of a ``longer'' diagonal, we must get 0 clicks at the very end. The pattern $\Evo_p^{(n)}$ tells us exactly how to chase the lights along diagonals of length $p^{n}$, so let us take a look at what this pattern can show us, as illustrated in \cref{fig:evo_3-diamonds}. 

For one, the pattern always has $\pm 1$'s around the outside, since the outermost clicks are $1$ (the coefficient of $x^{p-1}$ in $g(x, y)^{p-1}$ is 1) and the resulting state after clicking the pattern has no lights on around the edge of the diamond which makes up its nonzero entries. 
% This is our favorite way of showing that
% \[{p-1 \choose i} \equiv (-1)^i \pmod p,\]
% since it is the coefficient  of $x^iy^{p-1-i}$ in $(x + y + 1 + x^{-1} + y^{-1})^{p-1}$ and these are precisely the numbers along the edge of the pattern.
%MAKE FIGURE HERE!!
As the pattern evolves from $\Evo_p(1)$ to $\Evo_p^{(n+1)}(1)$, these $\pm 1's$ are replaced by $\Evo_p^{(n)}(1)$ patterns themselves, which are positioned as adjacent diamonds. Consider the shorter diagonal between any two such adjacent diamonds. Because there are only 0's outside the diamond pattern of nonzero clicks that is $\Evo_p(1)$, counting from the outside inward, there are $(p^n-1)/2$ zeroes along these shorter diagonals ($(3^1-1)/2 = 1$ zero in \cref{fig:evo_3-diamonds}). These zeroes are followed by $\pm k_0$ coming from the next diamond, which is a pattern $\Evo_p^{(n)}(k_0)$ for some number $k_0$ in the second (shorter) diagonal of the primitive pattern $\Evo_p(1)$. Now any number $k_0$ in the second diagonal of the $\Evo_p(1)$ pattern must be nonzero $\pmod p$: it is the coefficient of $x^{p-2-i}y^i$ in $g(x, y)^{p-1} = (x + y + 1 + x^{-1} + y^{-1})^{p-1}$ for some $i = 0, 1, \dots, p-2$, and in the expansion of $g(x, y)^{p-1}$, such a term can only be found by multiplying $p-2-i$ factors equal to $x$, $i$ factors equal to $y$, and one factor equal to 1, making the coefficient $(p-1)!/[(p-2-i)!i!1!]$, which cannot be divisible by $p$.

With this understanding, if the length $L$ of the diagonal we are chasing along is evenly divisible by $p^n$ (i.e. but not by $p^{n+1}$), then we know exactly how chasing the lights along that diagonal works by looking at a pattern $\Evo_p^{(N)}(1)$ for some $N$ such that $p^N > L$. The fractal nature of this pattern tells us that the edge of our rectangle falls along the edge of an $\Evo_p^{(n)}$ diamond of clicks, and there are $(p^n-1)/2$ zeroes between this diamond and the adjacent diamond, meaning we can successfully chase along $(p^n-1)/2$ shorter diagonals and along $(p^n-1)/2 + (p^n-1)/2 + 1$ totals diagonals while leaving the states of all squares in the previous diagonals 0, but the next shorter diagonal will not work because the number of clicks after its end is $\pm k_0$ for some $k_0 \ne 0$ in the second diagonal of $\Evo_p(1)$, which means the square above the end of this shorter diagonal will not be turned to a state of 0 via this chasing. Hence, we have

\begin{lem}
    \label{lem:chasing_limit}
    Let $p$ be prime. If a side of a diagonal rectangle of clicks is divisible by $p^n$ but not $p^{n+1}$, then chasing lights parallel to this diagonal will fail at the $(p^n+1)$st diagonal (we include the shorter diagonals in the count).
\end{lem}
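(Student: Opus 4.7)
The plan is to exploit the uniqueness of the chasing procedure together with the self-similar structure of $\Evo_p^{(N)}(1)$ for large $N$, as foreshadowed in the discussion preceding the statement of the lemma.

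First, I would note that the chasing procedure is uniquely determined: once one diagonal of clicks is fixed, the clicks in the next diagonal are forced by the requirement that each square in the previous diagonal end up with state $0$. Thus the pattern of clicks produced by chasing along any number of diagonals parallel to the starting one is completely determined, and the work reduces to identifying these forced clicks with a known pattern.

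Next, I would embed the chasing problem into the fractal structure of $\Evo_p^{(N)}(1)$ by choosing $N$ so that $p^N > L$. Writing $L = p^n m$ with $\gcd(m,p)=1$, the starting alternating $\pm 1$ diagonal of length $L$ sits along the outer edge of a length-$p^n$ sub-diamond in the fractal decomposition of $\Evo_p^{(N)}(1)$. The outer edge of $\Evo_p^{(n)}(1)$ has entries $\binom{p^n - 1}{k} \pmod p$ which, by Lucas's theorem, reduce to $\pm 1$ with precisely the required alternating sign pattern. By uniqueness of chasing, the forced clicks coincide with the interior entries of that sub-diamond.

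The counting step then proceeds as in the prose discussion: between one $\Evo_p^{(n)}$-level sub-diamond and the next, the fractal structure places exactly $(p^n-1)/2$ zero diagonals on each side of a central diagonal, giving the bookkeeping $(p^n-1)/2 + (p^n-1)/2 + 1 = p^n$ diagonals that can be chased without incident. The $(p^n+1)$-st diagonal, however, lies against an entry from the second-outermost diagonal of the adjacent $\Evo_p(1)$-factor, namely the coefficient of $x^{p-2-i}y^i$ in $g(x,y)^{p-1}$ for some $0 \le i \le p-2$. This coefficient equals the multinomial $(p-1)!/((p-2-i)!\,i!\,1!) \equiv -\binom{p-2}{i} \not\equiv 0 \pmod p$, so the chase cannot zero out that square's state, and the procedure fails at precisely the $(p^n+1)$-st diagonal.

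The main obstacle I anticipate is making the embedding into $\Evo_p^{(N)}(1)$ fully rigorous: one must verify that a diagonal whose length is divisible by $p^n$ but not $p^{n+1}$ aligns with the boundary of a length-$p^n$ sub-diamond (and not with one of a different size), and that the alternating $\pm 1$ signs are consistent under the identification. The bookkeeping of positions within nested sub-diamonds is routine but somewhat delicate; once it is done, the uniqueness of chasing together with the nonvanishing of the second-diagonal coefficients of $\Evo_p(1)$ complete the argument.
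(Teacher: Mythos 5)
Your proposal follows essentially the same route as the paper's own argument: embed the uniquely forced chase into the fractal structure of $\Evo_p^{(N)}(1)$ with $p^N > L$, count the $(p^n-1)/2$ zeroes between adjacent $\Evo_p^{(n)}$ sub-diamonds to get exactly $p^n$ successfully chased diagonals, and invoke the nonvanishing mod $p$ of the second-diagonal coefficients $(p-1)!/((p-2-i)!\,i!\,1!)$ of $\Evo_p(1)$ to force failure at the $(p^n+1)$st diagonal. Your Lucas-theorem check of the alternating $\pm 1$ outer edge and the simplification of the second-diagonal entries are correct added detail, and the alignment step you flag as delicate is treated at the same informal level in the paper itself.
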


As a corollary, if the side of a diagonal rectangle has length $L$ and an adjacent side is longer than $(L+1)/2$, then we must have a square lit after the clicks in this rectangle, and the only way we could have no squares lit after chasing parallel to $L$ is if $L$ is a power of $p$ and both sides adjacent to $L$ are at most $(L+1)/2$ in length. Of course, it doesn't matter along which side we chase the lights, so we immediately see that if we chase along the shorter side in this case, we would have to leave a light on again.

\begin{proof}[Proof of Theorem \ref{thm:5Lights}]
    First we prove we cannot leave only 4 lights on. Suppose two adjacent sides of the rectangle of clicks have lengths $a$ and $b$. Then, chasing along the diagonal of length $b$, we see that a light will be left inside the rectangle or on its side (not outside) unless $a \le (b+1)/2$. Similarly, chasing along the diagonal of length $a$, we must have $b \le (a+1)/2$, so $a \le (a+3)/4$, which implies $a \le 1$, and similarly $b \le 1$, so there is just the one square clicked (which leaves 5 lights on) - a contradiction.

    Now, suppose we have only 5 lights left on, and only one square is clicked on each edge of the bounding rectangle (so only one highest square, one lowest, one leftmost, and one rightmost are clicked). Then the 5th light that remains lit must be in the very middle of the rectangle and each diagonal must be length $p^n$ for some $n$ because no matter what side we start chasing along, we won't be able to leave all lights off after passing the $p^n$-th (or middle) diagonal. Since there is a unique way to chase the lights, and the $\Evo_p^{(n)}(1)$ pattern accomplishes this, the clicks must be this pattern.

    One more possibility remains: we could have the 5 lights be on outside of the rectangle containing the clicks, namely we could have 2 nonzero clicks on one edge, a point not made in \citep{Mikado}. This can't happen for 2-color puzzles because you could perform the erasing procedure (see \cref{thm:Era_p}) until the distance between the 2 squares clicked on one edge is odd, so the diagonals on the opposite side don't line up either and must end up coming together an odd distance away on the opposite side, which means they don't come to one place on the opposite side and we have at least 6 lights on.

    For $p > 2$ colors, suppose we have two nonzero clicks in two different squares in the top edge and only one nonzero click in the rest of the edges, so all the lit squares as a result of the clicks are outside of the rectangle. Note that we must proceed diagonally down-right from the upper-left nontrivial click for, say, $b$ clicks and down-left for, say, $c$ clicks, and say from the left edge we proceed down-right for $d$ clicks. Then, by \cref{lem:chasing_limit}, if we try and chase the lights along the side with length $d$, we see that the adjacent side must satisfy $c \le (d+1)/2$, lest a square inside or along the edge of the rectangle of clicks remain lit. But we could also chase parallel to the side of length $c$ to find that the adjacent side must satisfy $b \le (c+1)/2$, so we have $b \le (d+3)/4$ and $b + c \le (3d+5)/4$. The same is true on the right side; if we label the corresponding side lengths on the right as $b', c', d'$ then $b' + c' \le (3d'+ 5)/4$. But, counting the total number of columns in the pattern, we have
    \[d + d' - 1 = b + c + b' + c' - 3,\]
    so
    \[d + d' = b + c + b' + c' - 2 \le \frac34(d + d') + 1/2,\]
    which says $d + d' \le 2$, meaning all the squares are in one column: a contradiction. 

    % Now (d) is clear because any pattern that leaves exactly 5 lights on must be an evolution pattern $\pmod p$ for \textit{every} prime $p$, and the only pattern all these evolution patterns have in common is the single click (every pattern has a diagonal side length that is a power of $p$).

    % Now for (c) and to finish the proof of (a), suppose $k$ is not prime. If every number of clicks in the pattern is divisible by a prime $p$, we can divide the pattern and $k$ by $p$, and thus assume that the gcd of all the numbers in the pattern is 1. Chasing lights works the same way $\pmod p$ for every prime $p \mid k$. By the proof for primes, for a given prime $p$, this means the pattern is an evolution pattern $\pmod p$, and that the pattern leaves exactly 5 lights on. Obviously the pattern cannot be an evolution pattern for two primes simultaneously %(we can't have p^n = q^m for two different primes p, q)
    % so $k$ must be a prime power at this point. 

\end{proof}

\section{Acknowledgements}
We are indebted to Patrick Morton for numerous insightful conversations and generous sharing of time and expertise. We are also grateful for the intellectual support and encouragement of Fariba Khoshnasib-Zeinabad. This paper was started as undergraduate research, funded by Earlham College.

\bibliographystyle{plainnat}
\bibliography{refs}

\begin{thebibliography}{15}
\providecommand{\natexlab}[1]{#1}
\providecommand{\url}[1]{\texttt{#1}}
\expandafter\ifx\csname urlstyle\endcsname\relax
  \providecommand{\doi}[1]{doi: #1}\else
  \providecommand{\doi}{doi: \begingroup \urlstyle{rm}\Url}\fi

\bibitem[Cunningham and Minevich(2025{\natexlab{a}})]{FirstPaper}
Gabe Cunningham and Igor Minevich.
\newblock Graph products of groups, 2025{\natexlab{a}}.
\newblock URL \url{https://arxiv.org/abs/2502.05648}.

\bibitem[Cunningham and Minevich(2025{\natexlab{b}})]{SecondPaper}
Gabe Cunningham and Igor Minevich.
\newblock Graph powers of groups ii: The ra matrix, 2025{\natexlab{b}}.
\newblock URL \url{https://arxiv.org/abs/2510.10314}.

\bibitem[Edwards et~al.(2010)Edwards, Elandt, James, Johnson, Mitchell, and
  Stephenson]{LOFiniteGraphs}
Stephanie Edwards, Victoria Elandt, Nicholas James, Kathryn Johnson, Zachary
  Mitchell, and Darin Stephenson.
\newblock {Lights Out on finite graphs}.
\newblock \emph{Involve: A Journal of Mathematics}, 3\penalty0 (1):\penalty0 17
  -- 32, 2010.
\newblock \doi{10.2140/involve.2010.3.17}.
\newblock URL \url{https://doi.org/10.2140/involve.2010.3.17}.

\bibitem[Elkies(1987)]{elkies1987existence}
Noam~D Elkies.
\newblock The existence of infinitely many supersingular primes for every
  elliptic curve over $\mathbb{Q}$.
\newblock \emph{Inventiones mathematicae}, 89\penalty0 (3):\penalty0 561--567,
  1987.

\bibitem[Elkies(1992)]{elkies1992distribution}
Noam~D Elkies.
\newblock Distribution of supersingular primes.
\newblock \emph{Ast{\'e}risque}, pages 198--200, 1992.

\bibitem[Eriksson et~al.(2001)Eriksson, Eriksson, and Sj{\"o}strand]{Mikado}
Henrik Eriksson, Kimmo Eriksson, and Jonas Sj{\"o}strand.
\newblock Note on the lamp lighting problem.
\newblock \emph{Advances in Applied Mathematics}, 27\penalty0 (2-3):\penalty0
  357--366, 2001.

\bibitem[Hartshorne(1977)]{Hartshorne}
Robin Hartshorne.
\newblock \emph{Algebraic geometry}, volume No. 52 of \emph{Graduate Texts in
  Mathematics}.
\newblock Springer-Verlag, New York-Heidelberg, 1977.
\newblock ISBN 0-387-90244-9.

\bibitem[Knuth et~al.(2008)Knuth, Chapman, and Martin]{KnuthProblems}
Donald Knuth, Robin Chapman, and Reiner Martin.
\newblock Perfect parity patterns: 11243.
\newblock \emph{The American Mathematical Monthly}, 115\penalty0 (7):\penalty0
  668--670, 2008.
\newblock ISSN 00029890, 19300972.
\newblock URL \url{http://www.jstor.org/stable/27642574}.

\bibitem[Knuth(2011)]{Knuth}
Donald~E Knuth.
\newblock \emph{The art of computer programming, volume 4A: combinatorial
  algorithms, part 1}.
\newblock Pearson Education India, 2011.

\bibitem[Kreh(2017)]{Kreh}
Martin Kreh.
\newblock “{L}ights out” and variants.
\newblock \emph{The American Mathematical Monthly}, 124\penalty0 (10):\penalty0
  937--950, 2017.

\bibitem[Scherphuis(2022)]{Jaap}
Jaap Scherphuis.
\newblock Jaap's puzzle page, 2022.
\newblock URL \url{https://www.jaapsch.net/puzzles/lights.htm#desc}.
\newblock Last accessed 16 April 2022.

\bibitem[Schicho and Top(2012)]{AlgApproach}
Josef Schicho and Jaap Top.
\newblock Algebraic approaches to flipit.
\newblock \emph{The Resolution of Singular Algebraic Varieties}, 20:\penalty0
  319--325, 2012.

\bibitem[Silverman(2009)]{silverman2009arithmetic}
Joseph~H Silverman.
\newblock \emph{The Arithmetic of Elliptic Curves, 2nd ed.}
\newblock Graduate Texts in Mathematics. Vol. 106. Springer, 2009.

\bibitem[Silverman(2013)]{silverman2013advanced}
Joseph~H Silverman.
\newblock \emph{Advanced topics in the arithmetic of elliptic curves}, volume
  151.
\newblock Springer Science \& Business Media, 2013.

\bibitem[Sutner(1989)]{Sutner1989}
Klaus Sutner.
\newblock Linear cellular automata and the garden-of-eden.
\newblock \emph{The Mathematical Intelligencer}, 11\penalty0 (2):\penalty0
  49--53, 1989.

\end{thebibliography}

\end{document}